\documentclass[10pt,twocolumn,twoside]{IEEEtran} 

\usepackage{graphicx}
\usepackage{subfigure}
\usepackage{caption}
\usepackage{amsmath}
\usepackage{amssymb}
\usepackage{amsthm} 
\usepackage{accents}
\usepackage{statex}
\usepackage{cases}
\usepackage{mathrsfs}
\usepackage{tikz,pgfplots}
\pgfplotsset{compat=newest}
\usepackage{xcolor}
\usepackage{caption}
\usepackage{diagbox}
\usepackage{setspace}
\usepackage{booktabs}
\usepackage{array}
\usepackage{cite}
\usepackage{enumitem}
\usepackage{algorithm}
\usepackage{algorithmic}
\usepackage{multirow}
\usepackage{mathrsfs}
\usetikzlibrary{patterns}
\usetikzlibrary{graphs}
\usetikzlibrary{graphs.standard}
\usetikzlibrary{matrix,arrows,fit,backgrounds,mindmap,plotmarks,decorations.pathreplacing}
\newcolumntype{C}[1]{>{\centering\arraybackslash}p{#1}}
\usepackage[colorlinks,linkcolor=blue]{hyperref}

\usetikzlibrary{automata}

\newtheorem{proposition}{Proposition}
\newtheorem{theorem}{Theorem}
\newtheorem{definition}{Definition}
\newtheorem{lemma}{Lemma}

\newtheorem{remark}{Remark}

\newtheorem{condition}{Condition}

\newlength\figureheight
\newlength\figurewidth

\DeclareMathOperator{\1}{\textbf{1}} 
 
\DeclareMathOperator*{\argmin}{argmin}

\DeclareMathOperator*{\rank}{rank\,}

\DeclareMathOperator*{\mI}{\mathcal{I}}
\DeclareMathOperator*{\mH}{\mathscr{H}}
\DeclareMathOperator*{\mB}{\mathscr{B}}
\DeclareMathOperator*{\idx}{idx}
\DeclareMathOperator*{\nul}{Null\,}

\newcommand{\revise}[1]{\textcolor{blue}{#1}}

\newcommand{\algorithmicbreak}{\textbf{break}}

\newcommand{\la}[1]{\label{#1}}
\newcommand{\re}[1]{(\ref{#1})}
\newcommand{\bmx}{\begin{bmatrix}}
\newcommand{\emx}{\end{bmatrix}}
\newcommand{\bsm}{\left[\begin{smallmatrix}}
\newcommand{\esm}{\end{smallmatrix}\right]}
\newcommand{\R}{\ensuremath{\mathbb{R}}} 
\newcommand{\calL}{\mathscr{L}} 
\newcommand{\HH}{\mathscr{H}}   
\newcommand{\Beh}{\mathscr{B}}   
\newcommand{\invn}{\bm{n}} 
\newcommand{\invl}{\bm{\ell}} 
\newcommand{\invp}{\bm{p}} 
\newcommand{\invm}{\bm{m}} 


\title{Secure Data Reconstruction:\\ A Direct Data-Driven Approach}

\author{Jiaqi Yan, Ivan Markovsky, and John Lygeros
	\thanks{
		J. Yan is with the School of Automation Science and Electrical Engineering, Beihang University, Beijing 100191, P. R. China. (jqyan@buaa.edu.cn)}
\thanks{I. Markovsky is with the Catalan Institution for Research and Advanced Studies (ICREA), Barcelona, Spain and the International Centre for Numerical Methods in Engineering (CIMNE), Barcelona, Spain. (imarkovsky@cimne.upc.edu)}
\thanks{
 J. Lygeros is with the Automatic Control Laboratory, ETH Zurich. (jiayan@ethz.ch)}
\thanks{This work was supported by the Swiss National Science Foundation through NCCR Automation under Grant agreement 51NF40\_180545, ICREA, and the Fond for Scientific Research Vlaanderen (FWO) project G033822N.}
}

\begin{document}
\maketitle

\begin{abstract}

This paper addresses the problem of secure data reconstruction for unknown systems, where data collected from the system are susceptible to malicious manipulation. We aim to recover the real trajectory without prior knowledge of the system model. To achieve this, a behavioral language is used to represent the system, describing it using input/output trajectories instead of state-space models. We consider two attack scenarios. In the first scenario, up to $k$ entries of the collected data are malicious. On the other hand, the second scenario assumes that at most $k$ channels from sensors or actuators can be compromised, implying that any data collected from these channels might be falsified. For both scenarios, we formulate the trajectory recovery problem as an optimization problem and introduce sufficient conditions to ensure successful recovery of the true data. Since finding exact solutions to these problems can be computationally inefficient, we further approximate them using an $\ell_1$-norm and group Least Absolute Shrinkage and Selection Operator (LASSO). We demonstrate that under certain conditions, these approximation problems also find the true trajectory while maintaining low computation complexity. Finally, we extend the proposed algorithms to noisy data. By reconstructing the secure trajectory, this work serves as a safeguard mechanism for subsequent data-driven control methods.

\end{abstract}

\begin{IEEEkeywords}
Secure data reconstruction, data-driven control, behavioral language.
\end{IEEEkeywords}

\section{Introduction}
The past decades have witnessed remarkable research interest in cyber-physical systems (CPSs). Application of these systems varies from aerospace, manufacturing, transportation, and power grids. Security stands as a top concern for CPS, since any failure or attack can lead to significant economic losses, or even pose threats to human lives \cite{cardenas2008secure}.
Nowadays, the extensive use of communication units (e.g., sensors and actuators) makes the system more vulnerable than before, where an adversary can easily break a communication channel and tamper with the transmitted data. Researchers, therefore, have recognized the importance of implementing secure detection and control algorithms.

Existing solutions typically rely on model-based approaches. One common strategy is detecting attacks by creating an observer based on the system model \cite{chowdhury2020observer,teixeira2010networked}, \revise{where both linear and nonlinear systems are considered}. One estimates the process states using this observer, and \revise{makes} a decision on possible misbehaviors by comparing these estimates with the received data. An application of this approach is bad data detection in power systems \cite{deng2016false}. 

Instead of detecting and isolating malicious components, another line of research focuses on developing resilient algorithms which guarantee an acceptable performance even in the presence of false data. This is usually achieved by deploying additional nodes or channels to ``compensate" for fault signals. For instance, \cite{mishra2016secure,mao2022computational,li2021low,nakahira2018attack} solve the problem of secure state estimation in \revise{linear systems}. According to their results, the system must be at least $2k$-sparse observable in order to tolerate $k$ malicious sensors. This means that the removal of any $2k$ sensors should not compromise the system's observability. Similarly, for the resilient cooperative control problem, a $(2k+1)$-robust graph is necessary to handle $k$ malicious agents, where redundant communication channels should be added to each normal agent \cite{yan2022resilient,leblanc2013resilient}.

In these works, exact system models are integral to the algorithm design. However, deriving models from first principles can be challenging in practical scenarios. Often, only input/output trajectories collected from the system are available, which motivates the study of data-driven control. Typically, data-driven methods can be classified into \textit{indirect} data-driven approaches, involving system identification followed by the model-based control, and \textit{direct} data-driven approaches, where control policies are learned directly from data without identifying a model  \cite{piga2017direct,dorfler2022role,markovsky2022data}.

Here we aim to enhance the CPS resiliency using a direct data-driven approach. We investigate two potential attack scenarios. In the first scenario, up to $k$ entries of the collected data are malicious. On the other hand, the second scenario assumes that at most $k$ channels of sensors or actuators are compromised, meaning that any input/output data collected from these channels can be manipulated. Our objective is to reconstruct the true trajectory for both scenarios without identifying a parametric representation of the system. 

\revise{As shown in \cite{markovsky2021behavioral,markovsky2023data}, the behavioral approach is particularly suitable for solving direct data-driven problems and has advantages over classical model-based methods. The key result established in the behavioral setting that makes solution of the direct-data driven control problems possible is the so called fundamental lemma (\hspace{0.5pt}\cite{willems2005note}). In our paper, we build upon a more recent result from \cite{markovsky2022identifiability}, which extends the fundamental lemma by providing necessary and sufficient conditions—known as the generalized persistency of excitation condition—for establishing a nonparametric representation of the data-generating system (which is the unknown system under study), without requiring a predetermined input/output partitioning of the data \cite{fl}. Specifically, the nonparametric representation refers to an equivalent description of the system using input/output data, rather than the system matrices.} The key assumption on the data requires prior knowledge of the true system's complexity (number of inputs and order). The result is applicable for data consisting of multiple trajectories, uncontrollable systems, and various classes of \revise{nonlinear systems}, see \cite{berberich2020,9304122,9683327,9683151,ddsim-narx}.

\revise{Note that the data used to construct the nonparametric representation is obtained \textit{offline}. Building upon this nonparametric representation, we then perform \textit{online} data reconstruction. That is, to detect whether the data received online is malicious.} The contributions of this paper are \revise{two-fold}:

 1) In both the entry-attacked and channel-attacked scenarios, we formulate the trajectory recovery problem as optimization problems utilizing \cite[Corollary~21]{markovsky2022identifiability}. We introduce \revise{sufficient conditions on the Hankel matrix for performing the online data reconstruction that ensures recovery of the true trajectory}. The conditions are demonstrated to be tight; violating them can result in unsuccessful recovery. 

2) The proposed problems are known to be NP-hard. Finding exact solutions involves solving a combinatorial number of subproblems. To reduce the computational complexity, we approximate the problems by using the $\ell_1$-norm and group Least Absolute Shrinkage and Selection Operator (LASSO). This transformation renders the problems convex, enabling tractable solutions with low computational complexity. We further establish conditions under which these approximation problems also yield the true trajectory. Extension to the noisy data is finally developed.

\revise{To the best of our knowledge, this is the first work to consider security for unknown systems by using the behavioral language. The motivation for ``recovering" the true data is to ensure a safe trajectory required for the subsequent data-driven control. Without this, malicious data could lead to incorrect or even unsafe control actions. Therefore, our data recovery method acts as a safeguard for data-driven approaches, regardless of the specific learning or control algorithms employed.}

The rest of this paper is organized as follows. Section~\ref{sec:pre} introduces preliminaries on behavior language. Sections~\ref{sec:case1} and \ref{sec:case2} formulate the trajectory recovery problem in the entry-attacked and channel-attacked scenarios, respectively. We propose computation-efficient solutions in Section~\ref{sec:solve}, and verify their performance through numerical examples in Section~\ref{sec:sim}. Finally, Section~\ref{sec:cond} concludes the paper.

\section{Preliminaries}\label{sec:pre}

We use the behavioral language \cite{W07,markovsky2022data}. \revise{Let $q$ be the number of variables, $\mathbb{N}$ be the set of natural numbers, and $(\mathbb{R}^q)^\mathbb{N}$ be the set of functions from $\mathbb{N}$ to~$\mathbb{R}^q$. 
A system~$\mathscr{B}$ is defined as a set of trajectories, i.e., $\mathscr{B} \subseteq (\mathbb{R}^q)^\mathbb{N}$.} The system is linear if $\mathscr{B}$ is a subspace, and is time-invariant if it is invariant to the action of the unit shift operator~$\sigma$, defined as $(\sigma w)(t)=w(t+1)$. The class of linear time-invariant (LTI) systems with~$q$ variables is denoted by~$\calL^q$. 

If necessary, the variables $w$ of $\Beh\in\calL^q$ can be partitioned into inputs~$u$ and outputs~$y$, i.e., there is a permutation matrix~$\Pi$, such that $w = \Pi \bsm u\\ y\esm$. An input/output partitioning is in general not unique; however, the number of inputs, denoted by $\invm(\Beh)$, is independent of the partitioning and is therefore a property of the system~$\Beh$. Moreover, we define
$
    \invp(\Beh)\triangleq q-\invm(\Beh),
$ \revise{which is the number of output channels.} Other properties of an LTI system~$\Beh$, used later on, are the order $\invn(\Beh)$ (dimension of a minimal state-space representation) and the lag $\invl(\Beh)$ (degree of a minimal kernel representation). \revise{We refer the readers to the recent overview and tutorial papers \cite{markovsky2021behavioral,markovsky2023data} for more details on the behavioral language.}

We denote by $w|_L$ the restriction of $w\in(\mathbb{R}^q)^\mathbb{N}$ to the interval $[1,L]$, i.e., $\big(w(1),\ldots,w(L)\big)$ and by $\mathscr{B}|_L$ the restriction of $\mathscr{B}$ to the interval $[1, L]$, i.e., $\left.\mathscr{B}\right|_L\triangleq\left\{w|_L \mid w \in\right.$ $\mathscr{B}\}$. 
For an LTI system, $\left.\mathscr{B}\right|_L$ is a subspace of $\mathbb{R}^{q L}$.

For any vector $v$, $v_r$ denotes the $r$-th entry of $v$. Let $\mI = \{i_1,\cdots,i_s\}$ be a set of indices with $|\mI|=s$. For $v\in\mathbb{R}^m$, we define $v|_{\mI}$ as the subvector of $v$ with indices in $\mI$, i.e.,
\begin{equation*}
	v|_{\mI} \triangleq [v_{i_1}, \ldots, v_{{i}_s}]^\top \in \mathbb{R}^s.
\end{equation*}
Similarly, for a matrix $A\in\mathbb{R}^{m\times n}$, $A|_{\mI}$ is defined as the submatrix of $A$ with row indices in ${\mI}$:
\begin{equation*}
	A|_{\mI} \triangleq [A^\top_{{i}_1}, \ldots, A^\top_{{i}_s}]^\top \in \mathbb{R}^{s\times n},
\end{equation*}
where $A_{{i}_r}$ represents the ${i}_r$-th row of $A$.
Furthermore, given a binary vector (i.e., a vector which entries are either $0$ or $1$), $\idx(v)$ returns an index set containing the positions of $1$-entries. 


\section{\revise{Entry-Attacked Scenario}}\label{sec:case1}
We consider a discrete-time LTI system $\Beh\in\calL^q$.
Our focus lies on a scenario where the system is unknown, but input/output trajectories from the system are available, enabling the data-driven approach. This section assumes that the collected data is noiseless. Later in Section~\ref{sec:noise}, we show how to extend the results to noisy datasets.

Consider a $T$-length trajectory $w_d\in(\R^{q})^T$ of the system, i.e., $w_{d} \in \mathscr{B}|_T$.
Let $\mathscr{H}_L(w_d)$ denote the Hankel matrix formed by $w_d$ with $L$ block rows, where $1\leq L \leq T$: 
\begin{equation*}
\label{eqn:hankel}	
\begin{split}
  \mathscr{H}_L(w_{d}) &\triangleq \begin{bmatrix}
     w_{d}(1) & w_{d}(2) & \cdots & w_{d}(T-L+1) \\
		w_{d}(2) & w_{d}(3) & \cdots & w_{d}(T-L+2) \\
		\vdots & \vdots & & \vdots \\
		w_{d}(L) & w_{d}(L+1) & \cdots & w_{d}(T)
 \end{bmatrix} \\ &\in \mathbb{R}^{q L \times (T-L+1)}.      
\end{split}
\end{equation*}   
We assume that $w_d$ is attack-free. Otherwise, it is not possible to recover the trajectory since $w_d$ encodes an understanding of the underlying system.

In this paper, we will use the following result as a key solution technique \cite[Corollary~21]{markovsky2022identifiability}. 
\begin{lemma}\label{lmm:funda_lemma}
	Consider $w_{d} \in \mathscr{B}|_T$, where $\mathscr{B} \in \mathscr{L}^q$ and $L \geq \invl(\Beh)$. The following statements are equivalent:
	\begin{enumerate}
		\item Generalized persistent excitation condition holds\footnote{\revise{Prior knowledge of the true system complexity, such as the number of inputs and order, is needed for the generalized persistent excitation condition. However, this is not required in the methods proposed later in the paper.}}, namely, \begin{equation}\label{eqn:PE}
			\rank \mathscr{H}_L\left(w_{d}\right)=\invm(\Beh) L+\invn(\Beh).
		\end{equation}
		\item Any $\Bar{w}\in(\R^{q})^L$ is a trajectory of $\mathscr{B}$, that is, $\Bar{w}\in \mathscr{B}|_L$, if and only if there exists a $g\in\mathbb{R}^{T-L+1}$ such that
		\begin{equation}\label{eqn:map}
			\Bar{w} = \mathscr{H}_L\left(w_{d}\right) g.
		\end{equation} 
  That is, $\mathscr{B}|_L=\text {image } \mathscr{H}_L\left(w_{d}\right).$ 
	\end{enumerate}
\end{lemma}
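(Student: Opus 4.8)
The plan is to reduce the equivalence to a dimension count in the finite-dimensional space $\R^{qL}$. The starting point, which holds \emph{independently} of the rank condition, is the one-sided inclusion $\operatorname{image}\mathscr{H}_L(w_d)\subseteq\mathscr{B}|_L$. Indeed, each column of $\mathscr{H}_L(w_d)$ is a length-$L$ window $\big(w_d(t),\dots,w_d(t+L-1)\big)$ of the trajectory $w_d\in\mathscr{B}|_T$; by time-invariance every such window is the restriction to $[1,L]$ of a trajectory of $\mathscr{B}$, and hence lies in $\mathscr{B}|_L$. Since $\mathscr{B}$ is linear, $\mathscr{B}|_L$ is a subspace and is thus closed under the linear combinations $\mathscr{H}_L(w_d)g$, which proves the inclusion. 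The whole question is therefore whether this inclusion is in fact an equality.

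The key ingredient I would invoke is the structural dimension formula for LTI behaviors: for $L\geq\invl(\Beh)$,
\begin{equation*}
	\dim\mathscr{B}|_L=\invm(\Beh)L+\invn(\Beh).
\end{equation*}
Granting this, both implications are short. For $(1)\Rightarrow(2)$, the rank condition \eqref{eqn:PE} gives $\dim\big(\operatorname{image}\mathscr{H}_L(w_d)\big)=\invm(\Beh)L+\invn(\Beh)=\dim\mathscr{B}|_L$; a subspace contained in another of the same finite dimension must coincide with it, so $\operatorname{image}\mathscr{H}_L(w_d)=\mathscr{B}|_L$, which is precisely statement~(2). For $(2)\Rightarrow(1)$, equality of the two sets forces equality of dimensions, and hence $\rank\mathscr{H}_L(w_d)=\dim\big(\operatorname{image}\mathscr{H}_L(w_d)\big)=\dim\mathscr{B}|_L=\invm(\Beh)L+\invn(\Beh)$, which is \eqref{eqn:PE}.

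The genuine content---and the step I expect to be the main obstacle---is the dimension formula itself; everything else is linear algebra together with the definitions of linearity and time-invariance. I would prove it by passing to a minimal input/state/output representation of $\mathscr{B}$ and studying the linear map $\phi:(x(1),u|_L)\mapsto w|_L$ from $\R^{\invn(\Beh)}\times\R^{\invm(\Beh)L}$ into $\R^{qL}$. This map has image $\mathscr{B}|_L$ for every $L$, and the point is that it is injective once $L\geq\invl(\Beh)$: the inputs are read off directly from $w|_L$, while for $L$ at least the lag the matching input/output data determine the initial state uniquely by observability. Counting the free parameters of this now-injective parametrisation yields $\dim\mathscr{B}|_L=\invm(\Beh)L+\invn(\Beh)$. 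If one instead treats the dimension formula as a known property of the class $\calL^q$, the lemma collapses entirely to the short inclusion-plus-dimension argument above.
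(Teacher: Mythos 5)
Your proof is correct. Note that the paper itself offers no proof of this lemma---it is imported as \cite[Corollary~21]{markovsky2022identifiability}---and your argument (the inclusion $\operatorname{image}\mathscr{H}_L(w_d)\subseteq\mathscr{B}|_L$ from linearity and shift-invariance, combined with the dimension formula $\dim\mathscr{B}|_L=\invm(\Beh)L+\invn(\Beh)$ for $L\geq\invl(\Beh)$, the latter established through injectivity of the map $(x(1),u|_L)\mapsto w|_L$ for a minimal input/state/output representation once $L$ reaches the lag/observability index) is essentially the standard proof given in that cited reference, so the two approaches coincide.
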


Lemma~\ref{lmm:funda_lemma} is related to the fundamental lemma~\cite{willems2005note}. However, it does not require an input/output partitioning of the variables nor controllability of the system~$\mathscr{B}$. Instead of a condition on an input component of the trajectory (persistency of excitation of an input), it imposes a condition on the data $w_d$ (generalized persistency of excitation). Finally, the conditions of Lemma~\ref{lmm:funda_lemma} are necessary and sufficient while the ones of \cite{willems2005note} are only sufficient \cite{fl}.

\subsection{Entry-attacked model}

In the context of data-driven control, trajectories collected from the system directly influence controller design. Hence, ensuring reliability of the data is paramount. This paper thus focuses on the problem of secure trajectory reconstruction under data manipulation.

Let $\bar{w} \in(\R^{q})^L$ denote an $L$-length \revise{trajectory} of the LTI system $\Beh\in\calL^q$. When subjected to attacks, some entries of $\bar{w}$ are tampered with. As a result, a manipulated data sequence $w \in(\R^{q})^L$ is received. Our aim is to reconstruct the true signal $\bar{w}$ from the received data $w$. 

In the entry-attacked model, we assume that the attacker can change these entries to
arbitrary values, but can only compromise at most $k$ entries. This is reasonable since the adversary usually has limited energy. Let $\mathcal{C}\subseteq \{1,\ldots,qL\}$ denote the index set of malicious entries. 
Note that while the number $k$ is assumed to be known \textit{a priori}, the exact set $\mathcal{C}$ is unknown.

\subsection{A brute force algorithm}
To recover the real data, our goal is to identify a ``legal" trajectory of the system that matches the received data except for $k$ entries. This leads to the following optimization problem:
\begin{equation}\label{eqn:P0_}
\begin{split}
	\min_{\tilde{w},\mathcal{I}} \qquad&||w|_{\mI}-\tilde{w}|_{\mI}||_2\\
	s.t. \qquad &|\mI| = qL-k,\\
& \tilde{w} \in \mB|_L.
\end{split}
\end{equation}
Using Lemma~\ref{lmm:funda_lemma}, the above problem becomes
\begin{equation}\label{eqn:P0}
\begin{split}
	\min_{\tilde{w},g,\mathcal{I}} \qquad&||w|_{\mI}-\tilde{w}|_{\mI}||_2\\
	s.t. \qquad &|\mI| = qL-k,\\
& \tilde{w} = \mathscr{H}_L\left(w_{d}\right) g.
\end{split}
\end{equation}

Problem \eqref{eqn:P0} is generally NP-hard, as it can be re-formulated as a mixed integer linear program (MILP). In Algorithm~\ref{alg:combinatorial}, we present a brute force algorithm for solving the problem, where a combinatorial number of subproblems need to be solved to identify the index set $\mI^{(i)}$ leading to the minimum. Specifically, there are $\binom{qL}{k}$ possible sets of benign entries. That is, $|\mathscr{I}|=\binom{qL}{k}$, where
$
\revise{\mathscr{I} \triangleq\{\mathcal{I}^{(i)}| \mathcal{I}^{(i)} \subseteq [qL],|\mathcal{I}^{(i)}|=qL-k\}.}
$
\revise{We consider all possible subproblems by iterating through every set $\mathcal{I}^{(i)}\in \mathscr{I}$.}
For each subproblem $i$, the algorithm aims to solve the equation\footnote{The notation $w|_{\mI}$
for $\mI$ being a set overloads the notation $w|_L$ defined earlier in Section~\ref{sec:pre}
for $L$ being an integer.}
\begin{equation}\label{eqn:g}
w|_{\mI^{(i)}} = \mathscr{H}_L(w_d)|_{\mI^{(i)}} g.
\end{equation}
If a solution $g$ exists, Algorithm~\ref{alg:combinatorial} terminates with the output
\begin{equation}\label{eqn:tilde_w}
\tilde{w} = \mathscr{H}_L\left(w_d\right) g.
\end{equation}
By doing so, the algorithm identifies a trajectory that fits $w$ while excluding $k$ entries. Alternatively, if no solution is found for any $i=1,\ldots,\binom{qL}{k}$, Algorithm~\ref{alg:combinatorial} finishes without producing a solution. 

The complexity of Algorithm~\ref{alg:combinatorial} increases exponentially with the size of data and the number of malicious entries. In Section~\ref{sec:solve}, we will introduce a computationally efficient approximation for solving \eqref{eqn:P0}. 

\begin{algorithm}
\caption{Combinatorial algorithm for solving problem~\eqref{eqn:P0}.}\label{alg:combinatorial}
\begin{algorithmic}
\REQUIRE $w_d$, $w$, and $k$
\FOR{$i \in \{1,\ldots, \binom{qL}{k}\}$}
 \STATE Select $\mI^{(i)}$ with $|\mI^{(i)}| = qL-k$.
\IF{$\rank \left[\mathscr{H}_L(w_{d})|_{\mI^{(i)}} \;\; w|_{\mI^{(i)}}\right]=\rank  \mathscr{H}_L(w_{d})|_{\mI^{(i)}}$}
\STATE Find $g^{(i)}$ by solving \eqref{eqn:g} and compute $\tilde{w}$ by using \re{eqn:tilde_w}.
\STATE \algorithmicbreak
\ENDIF
\ENDFOR
\end{algorithmic}
\end{algorithm}

{\color{blue}
\begin{remark}\label{rmk:k}
As defined, \( k \) is an upper bound on the number of malicious entries. In practice, we usually can only estimate this upper bound instead of knowing the exact number of malicious entries.  Therefore, we design the secure algorithms based on the worst-case scenario, which involves removing exactly \( k \) entries in Algorithm~\ref{alg:combinatorial}. Since the algorithm is designed to handle this worst case, it is also effective when fewer than \( k \) entries are malicious.

It is true that when fewer than \( k \) entries are adversarial, we may end up discarding more data than necessary. To mitigate this, we can start with a smaller value \( \bar{k} < k \). If no solution is found for a given \( \bar{k} \), it indicates that \( \bar{k} \) is too small, that is, we are not removing enough entries. In this case, we iteratively increase \( \bar{k} \) until Algorithm~\ref{alg:combinatorial} finds a solution.
\end{remark}}

\subsection{Performance analysis}\la{sec:method}
\revise{Algorithm~\ref{alg:combinatorial} searches through all possible sets $\mathcal{I}^{(i)}$ such that $|\mathcal{I}^{(i)}| = qL - k$. Among these, there must exist a set $\mathcal{I}^{(i)}$ that contains only the indices of benign entries, that is, $\mI^{(i)} \subset [qL]/\mathcal{C}$, as the total number of benign entries is no less than $qL - k$. 
In this case, all compromised entries are removed. Therefore, we can always solve \eqref{eqn:g}} and Algorithm~\ref{alg:combinatorial} produces the true trajectory $\tilde{w} = \bar{w}$ \cite[Proposition~9]{markovsky2022data}. However, a natural question arises: when \eqref{eqn:g} yields a solution, does this necessarily imply that we have excluded all malicious entries and $\tilde{w} = \bar{w}$? More generally, what conditions guarantee that the output of Algorithm \ref{alg:combinatorial} always coincides with the real trajectory? This subsection is dedicated to addressing these questions.

To this end, let us first define the \textit{minimum critical row set} for a matrix.
\begin{definition}[Minimum critical row set]\label{def:mini_set}
	For a matrix $A\in\mathbb{R}^{m\times n}$, an index set $\mathcal{S}(A)$ is a \textit{critical row set} for $A$, if removing all the rows specified in the set $\mathcal{S}(A)$ causes the rank of~$A$ to decrease by~$1$. That is:
	\begin{equation*}
		\rank A|_{[m]/\mathcal{S}(A)} = \rank A -1.
	\end{equation*}
	The \textit{minimum critical row set}, $\mathcal{S}^*(A)$, is a critical row set with the least cardinality and is denoted as $\mathcal{S}^*(A)$.
\end{definition}

The minimum critical row set identifies the least number of rows, the removal of which causes the matrix $A$ to lose rank. The minimum critical row set for the Hankel matrix $\mathscr{H}_L(w_d)$ plays a crucial role in ensuring the successful reconstruction of system trajectory, as it allows $\mathscr{H}_L(w_d)$ to ``reproduce" the true trajectory even with the removal of certain rows. To establish this, we require the following condition.

\begin{condition}\label{con:mini_set}
	The cardinality of the minimum critical row set for the Hankel matrix $\mathscr{H}_L(w_d)$ is at least $2k+1$, i.e.,  $|\mathcal{S}^*(\mathscr{H}_L(w_d))| \geq 2k+1$.
\end{condition}


When Condition~\ref{con:mini_set} holds, removing any $2k$ rows does not lead to $\mathscr{H}_L(w_d)$ losing rank. In general, verifying this condition is computationally difficult. 
A related concept in model-based control is the $2k$-sparse observability \cite{mao2019secure, nakahira2018attack,shoukry2017secure,li2021low}. It requires that, removing any $2k$ sensors should not compromise observability of the system, that is, the observability matrix retains full rank. 
Existing works \cite{mao2022computational} and \cite{li2021low} offer potential insights into assessing the $2k$-sparse observability, which, therefore, shed light on checking Condition~\ref{con:mini_set}; We omit the detailed discussion here. Instead, the following lemma establishes an upper bound of $\big|\mathcal{S}^*(\HH_L(w_d))\big|$.

\begin{lemma}\label{lmm:S}
Under the generalized persistency of excitation condition \re{eqn:PE}, 
$\big|\mathcal{S}^*(\HH_L(w_d))\big| \leq \invp(\Beh)+1$.
\end{lemma}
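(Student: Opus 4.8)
The plan is to translate the combinatorial quantity $|\mathcal{S}^*(\mathscr{H}_L(w_d))|$ into a statement about \emph{sparse trajectories} of the system, and then to exhibit one explicitly. First I would observe that, by Lemma~\ref{lmm:funda_lemma}, the column space (image) of $\mathscr{H}_L(w_d)$ is exactly $\mathscr{B}|_L$. Deleting the rows indexed by a set $\mathcal{S}$ is the coordinate projection $\pi_{[qL]/\mathcal{S}}$ that forgets the entries in $\mathcal{S}$, so $\rank \mathscr{H}_L(w_d)|_{[qL]/\mathcal{S}} = \dim \pi_{[qL]/\mathcal{S}}(\mathscr{B}|_L)$, and the rank therefore drops by exactly $\dim\{\,v\in\mathscr{B}|_L : \operatorname{supp}(v)\subseteq\mathcal{S}\,\}$. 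Hence a row set $\mathcal{S}$ lowers the rank if and only if $\mathscr{B}|_L$ contains a nonzero trajectory supported inside $\mathcal{S}$. Consequently, to bound $|\mathcal{S}^*|$ from above it suffices to produce a single nonzero $\bar w\in\mathscr{B}|_L$ with $|\operatorname{supp}(\bar w)|\le \invp(\Beh)+1$: its support is then a row set whose removal decreases the rank by at least one.

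The core step is the explicit construction of such a sparse trajectory, for which I would pass to a minimal input/state/output representation $x(t+1)=Ax(t)+Bu(t)$, $y(t)=Cx(t)+Du(t)$ of $\Beh$, with $w=\Pi\bsm u\\ y\esm$, $u\in\R^{\invm(\Beh)}$ and $y\in\R^{\invp(\Beh)}$. Taking the relevant non-autonomous case $\invm(\Beh)\ge 1$, I would pick any input channel $j$ and set $x(1)=0$, $u(1)=\cdots=u(L-1)=0$, and $u(L)=e_j$. Propagating the recursion from the zero initial state gives $x(t)=0$ and hence $w(t)=0$ for all $t\le L-1$, while at the final instant $u(L)=e_j$ and $y(L)=De_j$. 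Thus $\bar w$ vanishes everywhere except at time $L$, where $w(L)=\Pi\bsm e_j\\ De_j\esm$ has at most $1+\invp(\Beh)$ nonzero entries: one from the impulse in the input, and at most $\invp(\Beh)$ from the $j$-th column of the feedthrough matrix $D$. Since $u(L)\ne 0$, this $\bar w$ is nonzero, and $|\operatorname{supp}(\bar w)|\le \invp(\Beh)+1$.

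Finally I would convert this sparse trajectory into a genuine minimum critical row set. Removing the rows $\mathcal{S}=\operatorname{supp}(\bar w)$ drops the rank by at least one; to obtain a set whose removal drops it by \emph{exactly} one, I would delete the elements of $\mathcal{S}$ one at a time. Each single deletion lowers the rank by $0$ or $1$, and the cumulative drop is positive after all of $\mathcal{S}$ is removed, so the first time the cumulative drop equals $1$ identifies a subset $\mathcal{S}'\subseteq\mathcal{S}$ that is a critical row set with $|\mathcal{S}'|\le|\mathcal{S}|\le\invp(\Beh)+1$. This yields $|\mathcal{S}^*(\mathscr{H}_L(w_d))|\le\invp(\Beh)+1$.

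The main obstacle is the construction achieving the sharp bound $\invp(\Beh)+1$ rather than the naive $q=\invm(\Beh)+\invp(\Beh)$: the savings come entirely from exploiting that an input impulse placed at the \emph{last} sample has not yet propagated through the state within the window $[1,L]$, so only the direct feedthrough term can pollute the output. A secondary subtlety worth flagging is the autonomous case $\invm(\Beh)=0$: there is then no input to impulse, and once $L-1\ge\invl(\Beh)$ a nonzero trajectory of an observable autonomous system cannot be confined to few samples, so the bound need not hold. The statement should thus be read under the standing assumption (natural in the data-driven control setting) that the data-generating system has at least one input.
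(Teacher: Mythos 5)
Your proof is correct, and it takes a genuinely different route from the paper's. Writing $\HH\triangleq\HH_L(w_d)$, the paper never leaves the Hankel matrix: it deletes the last block row, observes that the remaining submatrix is the Hankel matrix $\HH_{L-1}$ of the truncated data, whose image lies in $\Beh|_{L-1}$ and whose rank is therefore at most $\invm(\Beh)(L-1)+\invn(\Beh)$, and then notes that restoring the first $\invm(\Beh)-1$ rows of the deleted block can raise the rank by at most $\invm(\Beh)-1$; hence deleting only the last $\invp(\Beh)+1$ rows already caps the rank at $\invm(\Beh)L+\invn(\Beh)-1<\rank\HH$. You instead prove an intermediate characterization that the paper does not state---under \eqref{eqn:PE}, a row set is rank-decreasing if and only if $\Beh|_L$ contains a nonzero trajectory supported inside it---and then exhibit such a trajectory by an input impulse at the \emph{last} sample of a state-space representation. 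Your route costs an input/state/output representation, which the paper's pure dimension count avoids, but it buys a behavioral interpretation of critical row sets and in fact a sharper bound than the lemma claims: your construction yields $|\mathcal{S}^*(\HH)|\le 1+\min_j \|De_j\|_0$, which equals $1$ for strictly proper systems, whereas the paper's counting is stuck at $\invp(\Beh)+1$.

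Two further points are to your credit. First, Definition~\ref{def:mini_set} requires the rank to drop by \emph{exactly} one, while both your impulse argument and the paper's counting only show a drop of at least one; your one-row-at-a-time refinement closes this gap, which the paper passes over silently. Second, your caveat about autonomous systems is genuine, not an artifact of your method: for $\invm(\Beh)=0$ the lemma fails (take $q=1$ and $w_d(t)=a^t$ with $a\neq 0$; then $\rank\HH=1$ and every row of $\HH$ is nonzero, so no proper subset of rows is critical and $|\mathcal{S}^*(\HH)|=qL>\invp(\Beh)+1$ once $L\ge 3$), and the paper's inequality $\rank\HH|_{\mathcal{I}_2}\le\rank\HH|_{\mathcal{I}_1}+\invm(\Beh)-1$ likewise presupposes $\invm(\Beh)\ge 1$, since only then is its $\mathcal{I}_2$ obtained from $\mathcal{I}_1$ by adding rows. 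So the standing assumption you flag is implicitly shared by the paper's own proof.
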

\begin{proof}
Let us define \[\mathcal{I}_1 \triangleq \{\,1,\ldots,q(L-1)\,\}\] and consider 
$ \HH_L(w_d)|_{\mathcal{I}_1}$, which is the submatrix of $\HH_L(w_d)$ by removing the last $q$ rows.  Clearly, $ \HH_L(w_d)|_{\mathcal{I}_1}$ is also a Hankel matrix, which has $(L-1)$ block rows and is formed by the first $(T-1)$-length input/output trajectories of $w_d$. That is, $\HH_L(w_d)|_{\mathcal{I}_1}=\HH_{L-1}(w_d|_{[q(T-1)]})$.

Since the condition \re{eqn:PE} holds, by Lemma~\ref{lmm:funda_lemma}, for any $\bar{w}\in \mathscr{B}|_L$, there exists a $g\in\mathbb{R}^{T-L+1}$ such that
$\bar{w} = \mathscr{H}_L\left(w_{d}\right) g,
$ implying that $v = \HH_L(w_d)|_{\mathcal{I}_1} g$ holds for any $v\triangleq \bar{w}|_{\mathcal{I}_1}\in \mathscr{B}|_{L-1}$. Recalling Lemma~\ref{lmm:funda_lemma} again, we know that $\rank\HH_L(w_d)|_{\mathcal{I}_1} = m(L-1) + n$.

Next, consider 
\[\mathcal{I}_2 \triangleq \{\,1,\ldots,qL-\invp(\Beh)-1\,\}\] 
and $\HH_L(w_d)|_{\mathcal{I}_2}$, i.e., $\HH_L(w_d)$ with its last $\invp(\Beh)+1$ rows removed. We have that 
\begin{equation}
\begin{split}
\rank\HH_L(w_d)|_{\mathcal{I}_2} &\leq \rank\HH_L(w_d)|_{\mathcal{I}_1} + m - 1 \\&= \rank\HH_L(w_d)-1.    
\end{split}    
\end{equation}
Therefore, $[qL] / \mathcal{I}_2$ is a critical row set for $\HH_L(w_d)$.
\end{proof}

With this condition, we derive the following results, which ensures that Algorithm~\ref{alg:combinatorial} recovers the correct trajectory. For notation simplicity, henceforth, we denote 
\[
	\mathscr{H} \triangleq \mathscr{H}_L(w_{d}).
\]

\begin{theorem}\label{lmm:sufficient1}
    Suppose that \revise{no more than $k$ entries are malicious}, \eqref{eqn:PE} is satisfied, and Condition~\ref{con:mini_set} holds. When Algorithm~\ref{alg:combinatorial} finds a solution $\tilde{w}$, it follows that $\tilde{w}=\bar{w}$. That is, the real trajectory is recovered.
\end{theorem}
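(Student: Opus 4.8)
The plan is to prove the contrapositive-flavored statement that the difference $v \triangleq \tilde{w} - \bar{w}$ must vanish whenever the algorithm returns something. Suppose Algorithm~\ref{alg:combinatorial} terminates at some iteration with retained index set $\mathcal{I}$, $|\mathcal{I}| = qL - k$, and outputs $\tilde{w} = \mathscr{H} g$ where $g$ solves \eqref{eqn:g}. From \eqref{eqn:g} I get $\tilde{w}|_{\mathcal{I}} = \mathscr{H}|_{\mathcal{I}}\, g = w|_{\mathcal{I}}$, and since $\tilde{w} \in \operatorname{image}\mathscr{H}$, Lemma~\ref{lmm:funda_lemma} together with \eqref{eqn:PE} gives $\tilde{w} \in \mathscr{B}|_L$. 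The true trajectory satisfies $\bar{w} \in \mathscr{B}|_L = \operatorname{image}\mathscr{H}$ as well, so $v = \tilde{w} - \bar{w} \in \operatorname{image}\mathscr{H}$; write $v = \mathscr{H} h$.

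Next I would localize where $v$ can be nonzero. Let $\mathcal{J} \triangleq \mathcal{I}/\mathcal{C}$ be the benign indices that the algorithm kept. On $\mathcal{J}$ the received data equals the truth, $w|_{\mathcal{J}} = \bar{w}|_{\mathcal{J}}$ (because $\mathcal{J} \cap \mathcal{C} = \emptyset$), and it equals the reconstruction, $\tilde{w}|_{\mathcal{J}} = w|_{\mathcal{J}}$ (because $\mathcal{J} \subseteq \mathcal{I}$). Hence $v|_{\mathcal{J}} = 0$, so $v$ is supported inside $[qL]/\mathcal{J}$. A short counting step finishes the localization: since $|\mathcal{I}| = qL - k$ and $|\mathcal{C}| \leq k$, we have $|\mathcal{J}| \geq qL - 2k$, and therefore $v$ is supported on at most $2k$ entries.

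The crux is to convert ``$v \in \operatorname{image}\mathscr{H}$, $v \neq 0$, supported on at most $2k$ rows'' into a contradiction with Condition~\ref{con:mini_set}. If $v \neq 0$ then $h \notin \nul \mathscr{H}$; yet deleting the rows in $\operatorname{supp}(v)$ annihilates $v$ on the complement, i.e. $\mathscr{H}|_{[qL]/\operatorname{supp}(v)}\, h = 0$, so $h \in \nul\big(\mathscr{H}|_{[qL]/\operatorname{supp}(v)}\big)$. Thus removing these $\leq 2k$ rows strictly enlarges the kernel and strictly lowers the rank of $\mathscr{H}$. I would then argue that any row set whose removal drops the rank contains a subset whose removal drops it by \emph{exactly} one (delete rows one at a time and stop at the first drop), which is a critical row set of cardinality $\leq 2k$, contradicting $|\mathcal{S}^*(\mathscr{H})| \geq 2k+1$. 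Hence $v = 0$, i.e. $\tilde{w} = \bar{w}$.

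I expect the main obstacle to be making this last step rigorous: faithfully translating the sparsity of $v$ into the rank-drop language of Definition~\ref{def:mini_set}, and justifying that a set causing a rank drop of at least one always contains a genuine critical row set (drop of exactly one) of no larger size. Everything before it is bookkeeping on which indices are benign; the real work is done by Condition~\ref{con:mini_set} and by Lemma~\ref{lmm:funda_lemma} identifying $\mathscr{B}|_L$ with $\operatorname{image}\mathscr{H}$.
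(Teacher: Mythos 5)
Your proof is correct, and it reaches the conclusion by a genuinely different route than the paper's. The paper works in the space of coefficient vectors $g$: it splits the retained index set into its malicious and benign parts $\mathcal{C}^{(i^*)}$ and $\mathcal{B}^{(i^*)}$, invokes Condition~\ref{con:mini_set} to get $\rank \mathscr{H}|_{\mathcal{B}^{(i^*)}} = \rank \mathscr{H}$ and hence $\nul \mathscr{H}|_{\mathcal{B}^{(i^*)}} = \nul \mathscr{H}$, concludes that the solution set of \eqref{eqn:g} coincides with the solution set of the benign-rows-only equation and that every solution in it maps to one and the same trajectory, and finally exhibits $\bar{g}$ (with $\bar{w} = \mathscr{H}\bar{g}$, from Lemma~\ref{lmm:funda_lemma}) as a member of that set, forcing $\tilde{w}=\bar{w}$. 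You instead work in trajectory space: the error $v = \tilde{w}-\bar{w}$ lies in $\text{image}\,\mathscr{H}$ (again by Lemma~\ref{lmm:funda_lemma}), vanishes on the benign retained indices $\mathcal{J}$, hence is supported on at most $2k$ entries, and a nonzero such $v=\mathscr{H}h$ would put $h$ in $\nul\bigl(\mathscr{H}|_{[qL]/\operatorname{supp}(v)}\bigr)$ but not in $\nul\mathscr{H}$, forcing a rank drop after deleting at most $2k$ rows, which contradicts Condition~\ref{con:mini_set}. Both arguments pivot on the identical combinatorial fact ($|\mathcal{B}^{(i^*)}|,|\mathcal{J}|\geq qL-2k$) and the identical linear-algebra fact (removing at most $2k$ rows cannot change $\nul\mathscr{H}$ or $\rank\mathscr{H}$), but yours is a sparsity-plus-null-space contradiction in the style of secure state estimation, while the paper's is a constructive solution-set identity (which, as a by-product, makes explicit that the algorithm's output is the same for every solution $g$ of \eqref{eqn:g}). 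A point in your favor: you explicitly justify the step the paper asserts without proof, namely that Condition~\ref{con:mini_set}, which by Definition~\ref{def:mini_set} speaks only of rank drops of \emph{exactly} one, rules out \emph{any} rank loss from removing up to $2k$ rows; your delete-one-row-at-a-time argument (a single row changes rank by at most one, so the first drop yields a critical row set no larger than the deleted set) is precisely the lemma both proofs need.
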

\begin{proof}
Since Algorithm~\ref{alg:combinatorial} outputs a solution, there exists a set \revise{$\mI^{(i^*)}$} such that \eqref{eqn:g} is solvable.
We denote 
\revise{\begin{equation}\label{eqn:B}
    \mathcal{C}^{(i^*)} \triangleq \mathcal{I}^{(i^*)} \cap \mathcal{C}, \ \mathcal{B}^{(i^*)} \triangleq \mathcal{I}^{(i^*)} \backslash \mathcal{C}^{(i^*)},
\end{equation}}
which represent subsets of \revise{$\mI^{(i^*)}$} containing all malicious and benign entries, respectively. 

Since $|\mI^{(i^*)}| = qL-k$ and there are \revise{at most $k$ malicious entries}, it is clear that $|\mathcal{B}^{(i^*)}|\geq qL-2k$. As Condition~\ref{con:mini_set} holds, we conclude 
\begin{equation}\label{eqn:rank}
	\rank \mathscr{H}|_{\mathcal{B}^{(i^*)}} = \rank \mathscr{H}|_{\mathcal{I}^{(i^*)}} = \rank \mH.
\end{equation}  
\revise{Since $\mathscr{H}|_{\mathcal{B}^{(i^*)}}$ and $\mathscr{H}|_{\mathcal{I}^{(i^*)}}$ are obtained from $\mH$ by removing rows, it follows that}
\begin{equation}\label{eqn:null}
	\nul \mathscr{H}|_{\mathcal{B}^{(i^*)}} = \nul \mathscr{H}|_{\mathcal{I}^{(i^*)}} = \nul \mathscr{H}.
\end{equation}
Since \eqref{eqn:g} has a particular solution, denoted by $g_0^{(i^*)}$, we can represent the solution set of \eqref{eqn:g} as $\mathcal{G}^{(i^*)} = \{ g_0^{(i^*)}+\tilde{g}^{(i^*)}| \tilde{g}^{(i^*)}\in \nul \mathscr{H}|_{\mathcal{I}^{(i^*)}}\}$. On the other hand, we denote $\mathcal{G}_{\mathcal{B}}^{(i^*)}$ as the solution set of the following equation
\begin{equation}\label{eqn:solveB}
w|_{\mathcal{B}^{(i^*)}}=\mathscr{H}|_{\mathcal{B}^{(i^*)}} g.
\end{equation}
Notice that $g_0^{(i^*)}$ also solves \eqref{eqn:solveB}. Hence, \eqref{eqn:null} leads to
$
	\mathcal{G}^{(i^*)} = \mathcal{G}_{\mathcal{B}}^{(i^*)}.
$
That is, any solution of \eqref{eqn:solveB} also solves \eqref{eqn:g} and verse vice. In view of \eqref{eqn:tilde_w} and \eqref{eqn:null}, any solution $g\in \mathcal{G}^{(i^*)}$ (or $\mathcal{G}_{\mathcal{B}}^{(i^*)}$) maps to the same $\tilde w$. 

Next, we show $\tilde{w}=\bar{w}$. It is sufficient to prove that, there exists $g \in \mathcal{G}_{\mathcal{B}}^{(i^*)}$ such that 
$\bar{w} =\mH g$. To see this, we notice that $\bar{w}$ is the true trajectory. Therefore, by the virtue of Lemma~\ref{lmm:funda_lemma}, there must exist $
\bar{g}\in\mathbb{R}^{T-L+1}$ such that 
\begin{equation}\label{eqn:gbar}
    \bar{w} = \mH \bar{g}
\end{equation}
As $\mathcal{B}^{(i^*)}$ only contains non-faulty entries, it is clear
$
	w|_{\mathcal{B}^{(i^*)}} = \mathscr{H}|_{\mathcal{B}^{(i^*)}} \bar{g}.
$
Therefore, $\bar{g}\in \mathcal{G}_{\mathcal{B}}^{(i^*)}$, completing the proof.
\end{proof}

Condition~\ref{con:mini_set} is a sufficient condition to guarantee the trajectory recovery by Algorithm~\ref{alg:combinatorial}. \revise{Moreover, when Condition~\ref{con:mini_set} holds, Algorithm~\ref{alg:combinatorial} always produces a unique and optimal solution. Therefore, the problem is well-defined.}
In the following proposition, we further  demonstrate that this condition is tight.

\begin{proposition}\label{lmm:necessary}
Suppose that \revise{no more than $k$ entries are malicious}, \eqref{eqn:PE} is satisfied, but Condition~\ref{con:mini_set} does not hold. Even if Algorithm~\ref{alg:combinatorial} yields a solution $\tilde{w}$, it is possible that $\tilde{w}\neq \bar{w}$.
\end{proposition}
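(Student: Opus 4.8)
The plan is to prove tightness by explicit construction: granting that Condition~\ref{con:mini_set} fails, I will exhibit a single data set, a true trajectory $\bar w$, and an admissible attack for which Algorithm~\ref{alg:combinatorial} is entitled to return a trajectory other than $\bar w$. The underlying principle is the standard error-correction threshold---if two legal trajectories can each be brought into agreement with the received $w$ after deleting at most $k$ coordinates, then a procedure that merely tests consistency cannot favour the true one over its competitor.

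The first step is to convert the failure of Condition~\ref{con:mini_set} into a concrete ``ghost'' trajectory. Write $\mathcal{S}^* \triangleq \mathcal{S}^*(\mathscr{H})$, so $|\mathcal{S}^*| \le 2k$ and deleting the rows in $\mathcal{S}^*$ lowers $\rank \mathscr{H}$ by exactly one. Reading row deletion as a coordinate projection on $\operatorname{image}\mathscr{H}$, this drop is equivalent to $\operatorname{image}\mathscr{H}$ containing a one-dimensional subspace of vectors whose support lies in $\mathcal{S}^*$; let $w^*$ be a nonzero element. Minimality of $\mathcal{S}^*$ forces $\operatorname{supp}(w^*)=\mathcal{S}^*$ exactly, since a zero of $w^*$ inside $\mathcal{S}^*$ would exhibit a strictly smaller critical row set, and Lemma~\ref{lmm:funda_lemma} (using \eqref{eqn:PE}) gives $w^* \in \mathscr{B}|_L$.

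I would then assemble the instance. Partition $\mathcal{S}^* = \mathcal{S}_1 \sqcup \mathcal{S}_2$ with $|\mathcal{S}_1|,|\mathcal{S}_2| \le k$, which is possible because $|\mathcal{S}^*| \le 2k$. Let $\bar w \in \mathscr{B}|_L$ be any true trajectory, and let the attacker alter $\bar w$ only on $\mathcal{S}_1$, so that the received signal $w$ coincides with $\bar w$ off $\mathcal{S}_1$ and with $\bar w + w^*$ on $\mathcal{S}_1$. Because $w^*$ is nonzero throughout $\mathcal{S}^* \supseteq \mathcal{S}_1$, precisely $|\mathcal{S}_1| \le k$ entries are tampered, so the attack is admissible with $\mathcal{C}=\mathcal{S}_1$. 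Now introduce the competitor $\tilde w \triangleq \bar w + w^* \in \mathscr{B}|_L$: by construction $w$ differs from $\bar w$ only on $\mathcal{S}_1$ and from $\tilde w$ only on $\mathcal{S}_2$, both of size at most $k$. Choosing the index set $\mathcal I = [qL]/\mathcal{S}_2'$, where $\mathcal{S}_2'$ enlarges $\mathcal{S}_2$ to size exactly $k$ using rows disjoint from $\mathcal{S}_1$, makes \eqref{eqn:g} solvable, since there $w|_{\mathcal I}=\tilde w|_{\mathcal I}$ lies in the column space of $\mathscr{H}|_{\mathcal I}$; hence the loop---whose selection order is unspecified---may terminate at this $\mathcal I$.

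The point I would stress in closing is why the returned trajectory must differ from $\bar w$, independently of which solution $g$ the algorithm happens to compute. Any output is $v=\mathscr{H}g$ with $v|_{\mathcal I}=w|_{\mathcal I}$; since $\mathcal{S}_1 \subseteq \mathcal I$ and $w\neq\bar w$ on $\mathcal{S}_1$, we get $v|_{\mathcal I}\neq \bar w|_{\mathcal I}$ and hence $v\neq\bar w$. The step requiring the most care is the very first reduction---the equivalence between the rank drop and the existence of a vector of $\operatorname{image}\mathscr{H}$ supported exactly on $\mathcal{S}^*$---together with the bookkeeping that guarantees $\mathcal{S}_2'$ has size exactly $k$ while staying disjoint from $\mathcal{S}_1$; the cleanest presentation takes an instance with $|\mathcal{S}^*|=2k$, whereupon $\mathcal{S}_2'=\mathcal{S}_2$ already has size $k$ and no padding is needed. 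Since $w^*\neq 0$ yields $\tilde w \neq \bar w$, this produces an execution of Algorithm~\ref{alg:combinatorial} that recovers a wrong trajectory, establishing that Condition~\ref{con:mini_set} is tight.
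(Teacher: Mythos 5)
Your proof is correct and takes essentially the same route as the paper: both exploit the failure of Condition~\ref{con:mini_set} by mounting an attack supported on one part of the minimum critical row set $\mathcal{S}^*(\mathscr{H})$ while the algorithm deletes the other part, so that a ``ghost'' trajectory invisible on the retained benign rows makes a wrong trajectory consistent with the received data. The only real difference is presentational: the paper builds the ghost in coefficient space, picking $\tilde g \in \nul \mathscr{H}|_{\mathcal{B}^{(i)}} \backslash \nul \mathscr{H}|_{\mathcal{I}^{(i)}}$ and attacking with $\mathscr{H}|_{\mathcal{C}}\tilde g$, whereas you construct it directly in signal space as $w^*\in\operatorname{image}\mathscr{H}$ supported exactly on $\mathcal{S}^*$ (via rank--nullity plus the minimality argument) --- an equivalent object, since $w^*=\mathscr{H}\tilde g$ for a suitable $\tilde g$; your version has the small added benefit of pinning down exactly which entries are tampered.
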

\begin{proof}
We will prove this by construction. Since $|\mathcal{S}^*(\mathscr{H})| \leq 2k$, an adversary selects the set of manipulated entries as 
 \begin{equation}\label{eqn:C}
     \begin{cases}
       \mathcal{C} =  \mathcal{S}^*(\mathscr{H}), \text{ if } |\mathcal{S}^*(\mathscr{H})| < k,\\
       \mathcal{C} \subset  \mathcal{S}^*(\mathscr{H}) \text{ with } |\mathcal{C}|=k, \text{ otherwise}. 
     \end{cases}
 \end{equation}
Let us consider any $\mI^{(i)}$ such that $\mI^{(i)}\supseteq \mathcal{C}$. Then
$\mathscr{H}|_{\mathcal{C}^{(i)}}=\mathscr{H}|_{\mathcal{C}},$
where $\mathcal{C}^{(i)}$ is defined in \eqref{eqn:B}.
 With respect to the two cases in \eqref{eqn:C}, it follows that either $\mathcal{B}^{(i)}\subseteq [qL]\backslash \mathcal{S}^*(\mathscr{H})$, or $|\mathcal{B}^{(i)}|=qL-2k$. Therefore, there must exist some set $\mI^{(i)}$ such that
$\rank \mathscr{H}|_{\mathcal{B}^{(i)}} < \rank \mathscr{H}|_{\mathcal{I}^{(i)}}.$
 As a result, it is not difficult to verify 
$\nul \mathscr{H}|_{\mathcal{B}^{(i)}} \supsetneq \nul \mathscr{H}|_{\mathcal{I}^{(i)}}. $
 
Hence, the adversary can always choose some $\tilde{g} \in \{\nul \mathscr{H}|_{\mathcal{B}^{(i)}}\}\backslash \{\nul \mathscr{H}|_{\mathcal{I}^{(i)}}\}$ such that
\[
\begin{split}
     \mathscr{H}|_{\mathcal{B}^{(i)}} \tilde{g} &= 0,\;\mathscr{H}|_{\mathcal{C}} \tilde{g} \neq 0.
\end{split}
\]
Let $
    g = \bar{g} + \tilde{g},$ where $\bar{g}$ generates the real trajectory as given in \eqref{eqn:gbar}. We have
    \begin{equation}\label{eqn:wb}
        \mathscr{H}|_{\mathcal{B}^{(i)}} g = \mathscr{H}|_{\mathcal{B}^{(i)}} (\bar{g} + \tilde{g})= w|_{\mathcal{B}^{(i)}}.
    \end{equation}The adversary thus modifies data at the manipulated entries as 
 \begin{equation*}
   w|_{\mathcal{C}}=\mathscr{H}|_{\mathcal{C}}g = \mathscr{H}|_{\mathcal{C}}(\bar{g} + \tilde{g})=\Bar{w}|_{\mathcal{C}}+\mathscr{H}|_{\mathcal{C}}\tilde{g} \neq \Bar{w}|_{\mathcal{C}}.  
 \end{equation*}
Namely, it changes $\Bar{w}|_{\mathcal{C}}$ to a different signal $w|_{\mathcal{C}}$.
We show that by doing so, it can deceive the system operator into believing in some false data $\tilde{w}$ with $\tilde{w}\neq \bar{w}$. To see this, without loss of generality, we assume
\begin{equation}
    \mathcal{I}^{(i)} = \begin{bmatrix}
        \mathcal{B}^{(i)}\\
        \mathcal{C}
    \end{bmatrix}.
\end{equation}
Therefore,
\[
    w|_{\mathcal{I}^{(i)}} = \begin{bmatrix}
        w|_{\mathcal{B}^{(i)}}\\
        w|_{\mathcal{C}}
    \end{bmatrix} = \begin{bmatrix}
        \mathscr{H}|_{\mathcal{B}^{(i)}}\\
        \mathscr{H}|_{\mathcal{C}}
    \end{bmatrix}g = \mathscr{H}|_{\mathcal{I}^{(i)}} g
\]
That means $g$ is a solution of \eqref{eqn:g}. As a result, Algorithm~\ref{alg:combinatorial} can solve \eqref{eqn:g} and output $\tilde{w}$. However, it is clear
\[
   \tilde{w} = \mH g  = \begin{bmatrix}
        \mathscr{H}|_{\mathcal{C}} \\
        \mH|_{[qL]\backslash \mathcal{C}}
    \end{bmatrix}g = \begin{bmatrix}
        w|_{\mathcal{C}}\\
        \bar{w}|_{[qL]\backslash \mathcal{C}}
    \end{bmatrix} \neq \bar{w}.
\]
\end{proof}

\revise{Combining the above theorems with Lemma~\ref{lmm:S}, for successful data reconstruction, it must hold that
	$
	2k + 1 \leq \invp(\Beh) + 1,
	$
	namely,
	$
	k \leq \frac{\invp(\Beh)}{2}.
	$
	In other words, the number of attacked entries must \textit{not} exceed $\frac{\invp(\Beh)}{2}$, where $\invp(\Beh)$ is defined as the number of output channels. Otherwise, the data recovery problem can be infeasible.}

\revise{\begin{remark}\label{rmk:condition}
	There may be cases where Condition~\ref{con:mini_set} does not hold, yet Algorithm~\ref{alg:combinatorial} still successfully reconstructs the true trajectory, particularly when the adversary is less aggressive. However, Condition~\ref{con:mini_set} is tight, as shown in Proposition~\ref{lmm:necessary}, where violating this condition leads to failure in recovering the true trajectory. Since this paper focuses on the worst-case scenario, Condition~\ref{con:mini_set} is essential.
\end{remark}}

\section{Channel-Attacked Scenario}\label{sec:case2}
In Section~\ref{sec:case1}, we investigate the scenario where up to $k$ entries in the collected trajectory are manipulated. However, in real-world scenarios, adversaries may exploit specific input or output \textit{channels}, consistently altering all data coming from these channels. This section is devoted to investigating this scenario, where a maximum of $k$ out of $q$ channels are susceptible to manipulation. Since the adversary possesses complete control over these channels, any data collected from them can be falsified. In this case, up to $qk$ entries in the received data $w\in(\R^{q})^L$ might be different from those in the real trajectory $\bar w$. That is, $||w-\bar w||_0\leq qk.$

Let us denote $\mathcal{C} \subseteq \{1,\ldots,q\}$ as the index set of the manipulated channels, where $|\mathcal{C}|\leq k$. Similar to the previous case, the value of $k$ is assumed to be known \textit{a priori}, but the specific elements in the set $\mathcal{C}$ remain unknown. Notice that here, we assume that the set of compromised channels remains constant over time, as widely adopted in the literature (e.g., \cite{ren2018binary,fawzi2014secure,yan2022resilient}).

\subsection{A brute force algorithm}
To reconstruct the true trajectory from $w$, we solve the following problem:
\begin{equation}\label{eqn:P1}
\begin{split}
\min_{g,v_o} \qquad&||v(w-\tilde{w})||_2\\
s.t. \qquad  
& v_o \in \{0,1\}^q,\\
& ||v_o||_0 = q-k,\\
&v = \1_L \otimes v_o,\\
&\tilde{w} = \mathscr{H}_L\left(w_{d}\right) g.
\end{split}
\end{equation}
Here, the binary vector $v_o$ has $(q-k)$ entries set to $1$ and the other $k$ entries set to $0$. 
Moreover, by Lemma~\ref{lmm:funda_lemma}, $\tilde{w}$ is a valid trajectory of the system $\mathscr{B}$. Therefore, by optimizing over $g$ and $v_o$, our objective is to find the most fitting trajectory among all valid ones, such that it matches the received data $w$ at $(q-k)$ channels corresponding to the $1$-entries of $v_o$. \revise{Note that \eqref{eqn:P1} deals with the worst-case situation where exactly $k$ channels are removed. Therefore, it also works safely when fewer than $k$ channels are malicious.}

To obtain an exact solution to Problem~\eqref{eqn:P1}, a combinatorial problem needs to be tackled as outlined in Algorithm~\ref{alg:case2}. Specifically, within each of the $\binom{q}{k}$ subproblems, indexed as $i$, we pre-select the vector $v_o^{(i)} \in \{0,1\}^q$ with $||v_o^{(i)}||_0 = q-k$ and compute the corresponding index set $\mI^{(i)}$, specifying the positions of entries to be included in solving \eqref{eqn:g2}. If a solution is not found for any $i=1,\ldots,\binom{q}{k}$, Algorithm~\ref{alg:case2} terminates without producing a solution. 

\begin{algorithm}
	\caption{Combinatorial algorithm for solving problem~\eqref{eqn:P1}.}\label{alg:case2}
	\begin{algorithmic}
  \REQUIRE $w_d$, $w$, and $k$
\FOR{$i \in \{1,\ldots, \binom{q}{k}\}$}
 \STATE Select $v_o^{(i)} \in \{0,1\}^q$ with $||v_o^{(i)}||_0=q-k$.
 \STATE Let $v^{(i)} = \1_L \otimes v_o^{(i)}$ and $\mI^{(i)} = \idx(v^{(i)})$. 
\IF{$\rank \left[\mathscr{H}|_{\mI^{(i)}} \;\; w|_{\mI^{(i)}}\right]=\rank  \mathscr{H}|_{\mI^{(i)}}$}
\STATE Get $g^{(i)}$ by solving
\begin{equation}\label{eqn:g2}
\left.w\right|_{\mI^{(i)}}=\left.\mathscr{H}\right|_{\mI^{(i)}} g^{(i)}.
\end{equation}
\STATE Calculate 
\begin{equation}\label{eqn:tilde_w2}
\begin{split}
	\tilde{w} &=\mathscr{H} g^{(i)}.
\end{split}
\end{equation}
\STATE \algorithmicbreak
\ENDIF
\ENDFOR
\end{algorithmic}
\end{algorithm}

\subsection{Performance analysis}
In contrast to the entry-attacked scenario as discussed in Section~\ref{sec:case1}, in Algorithm~\ref{alg:case2}, the rows to be removed in $\mathscr{H}$ are determined by channels rather than individual entries. Moreover, to reconstruct the original trajectory, it is crucial to ensure that the matrix formed by the remaining rows has sufficient rank to represent the full space of the system $\mathscr{B}$. To address this, we introduce the following definitions, quantifying the rank of a matrix after deleting rows corresponding to specific channels:

\begin{definition}[$(q, L)$-periodical set]
Given an index set $\mathcal{S}\subseteq \{1, \ldots, q\}$, we define $\mathscr{C}^{q, L}(\mathcal{S})\subseteq \{1,\ldots,qL\}$ as the $(q, L)$-periodical set of $\mathcal{C}$, such that
\[
\mathscr{C}^{q, L}(\mathcal{S}) \triangleq \bigcup_{ i\in \mathcal{S}} \mathscr{C}_i^{q, L},
\]
where 
\begin{equation}\label{eqn:Ci}
\mathscr{C}_i^{q, L} \triangleq \{i+ q\ell |\ell \in \mathbb{N}, 0\leq \ell \leq L-1\}.
\end{equation}
\end{definition}

Clearly, for any $j$ in the $(q, L)$-periodical set $\mathscr{C}^{q, L}(\mathcal{S})$, there exists $i \in \mathcal{S}$ such that $j \mod q = i$. Furthermore, we introduce a \textit{$(q, L)$-critical row set} for any matrix $A$, if removing the rows specified by this set results in a loss of rank for $A$.

\begin{definition}[Minimum $(q, L)$-critical row set]\label{def:min_period_set}
	Consider any matrix $A\in\mathbb{R}^{m\times n}$. Let us define $\widetilde{\mathcal{S}}(A)$ as a \textit{$(q, L)$-critical row set} for $A$, if removing all the rows specified in its $(q, L)$-periodical set $\mathscr{C}^{q, L}(\widetilde{\mathcal{S}}(A))$ causes a rank loss for $A$. That is:
	\[
	\rank A|_{[mL]/\mathscr{C}^{q, L}(\widetilde{\mathcal{S}}(A))} = \rank A -1.
	\]
	Furthermore, the \textit{minimum $(q, L)$-critical row set} is defined as a $(q, L)$-critical row set with the least cardinality, and is denoted as $\widetilde{\mathcal{S}}^*(A)$.
\end{definition}

\revise{
\begin{remark}
From Definitions~\ref{def:mini_set}--\ref{def:min_period_set}, for any matrix $A$, if \(|\widetilde{\mathcal{S}}^*(A)| = \beta\), there exist \(\beta L\) rows such that removing them causes 
$A$ to lose rank. Therefore, \(|\mathcal{S}^*(A)| \leq \beta L\).
\end{remark}}

Next, we introduce the following condition on the Hankel matrix $\mathscr{H}_L(w_d)$.

\begin{condition}\label{con:cond2}
The cardinality of the minimum $(q, L)$-critical row set for the Hankel matrix $\mathscr{H}_L(w_d)$ is at least $2k+1$, i.e., $|\widetilde{\mathcal{S}}^*(\mathscr{H}_L(w_d))| \geq 2k+1$.
\end{condition}


Recall that the total number of channels is $q$. For any channel $i \in \{1, \ldots, q\}$, rows corresponding to the data from this channel are indexed as $\{i, i+q, \ldots, i+(L-1)q\}$, consistent with the definition of $\mathcal{C}_i^{q, L}$ in \eqref{eqn:Ci}. Therefore, Condition~\ref{con:cond2} requires that the removal of all rows corresponding to the data from \textit{any} $2k$ channels will not result in rank loss for $\mathscr{H}_L(w_d)$. This key aspect ensures the resilience of Algorithm~\ref{alg:case2} with respect to an unknown set of manipulated channels, as proved below:

\begin{theorem}
	Suppose that \revise{no more than $k$ channels are malicious}, \eqref{eqn:PE} is satisfied, and Condition~\ref{con:cond2} holds. When Algorithm~\ref{alg:case2} finds a solution $\tilde{w}$, it follows that $\tilde{w}=\bar{w}$. That is, the real trajectory is recovered.
\end{theorem}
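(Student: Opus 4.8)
The plan is to mirror the proof of Theorem~\ref{lmm:sufficient1} from the entry-attacked scenario, replacing the notion of individual benign entries with benign \emph{channels} and the minimum critical row set with the minimum $(q,L)$-critical row set. First I would observe that when Algorithm~\ref{alg:case2} outputs a solution, there is a selected channel set corresponding to $v_o^{(i^*)}$ with $\|v_o^{(i^*)}\|_0 = q-k$, and hence an index set $\mathcal{I}^{(i^*)} = \idx(v^{(i^*)})$ for which \eqref{eqn:g2} is solvable. As in \eqref{eqn:B}, I would split $\mathcal{I}^{(i^*)}$ into its malicious and benign parts $\mathcal{C}^{(i^*)} \triangleq \mathcal{I}^{(i^*)} \cap \mathscr{C}^{q,L}(\mathcal{C})$ and $\mathcal{B}^{(i^*)} \triangleq \mathcal{I}^{(i^*)} \backslash \mathcal{C}^{(i^*)}$, where now the splitting respects the periodic (channel) structure: the removed rows $[qL]\backslash \mathcal{I}^{(i^*)}$ correspond to exactly $k$ channels, and $\mathcal{C}^{(i^*)}$ collects all rows of manipulated channels that happen to survive the selection.

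The key rank step is to argue that $\mathcal{B}^{(i^*)}$ omits rows from at most $2k$ channels in total: the $k$ channels excluded by $v_o^{(i^*)}$, plus the at most $k$ truly compromised channels in $\mathcal{C}$. Consequently $\mathcal{B}^{(i^*)}$ is obtained from $\mathscr{H}$ by deleting all rows of at most $2k$ channels, i.e.\ by removing the $(q,L)$-periodical set of a channel index set of cardinality at most $2k$. Since Condition~\ref{con:cond2} guarantees $|\widetilde{\mathcal{S}}^*(\mathscr{H})| \geq 2k+1$, no $(q,L)$-critical row set has cardinality $\leq 2k$, so removing the rows of any $2k$ channels preserves the rank of $\mathscr{H}$. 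This yields the analogue of \eqref{eqn:rank}, namely $\rank \mathscr{H}|_{\mathcal{B}^{(i^*)}} = \rank \mathscr{H}|_{\mathcal{I}^{(i^*)}} = \rank \mathscr{H}$, and therefore, since these are row-deletions of $\mathscr{H}$, the null spaces coincide as in \eqref{eqn:null}: $\nul \mathscr{H}|_{\mathcal{B}^{(i^*)}} = \nul \mathscr{H}|_{\mathcal{I}^{(i^*)}} = \nul \mathscr{H}$.

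With the null spaces equal, the remaining argument is identical to that of Theorem~\ref{lmm:sufficient1}. Writing $\mathcal{G}^{(i^*)}$ for the solution set of \eqref{eqn:g2} and $\mathcal{G}_{\mathcal{B}}^{(i^*)}$ for the solution set of $w|_{\mathcal{B}^{(i^*)}} = \mathscr{H}|_{\mathcal{B}^{(i^*)}} g$, the equality of null spaces forces $\mathcal{G}^{(i^*)} = \mathcal{G}_{\mathcal{B}}^{(i^*)}$, so every solution maps through \eqref{eqn:tilde_w2} to the same $\tilde{w}$. It then suffices to exhibit one benign-consistent generator of the true trajectory: by Lemma~\ref{lmm:funda_lemma} there is $\bar{g}$ with $\bar{w} = \mathscr{H}\bar{g}$, and since $\mathcal{B}^{(i^*)}$ indexes only non-compromised channels we have $w|_{\mathcal{B}^{(i^*)}} = \bar{w}|_{\mathcal{B}^{(i^*)}} = \mathscr{H}|_{\mathcal{B}^{(i^*)}}\bar{g}$, whence $\bar{g} \in \mathcal{G}_{\mathcal{B}}^{(i^*)}$ and consequently $\tilde{w} = \mathscr{H}\bar{g} = \bar{w}$.

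I expect the main obstacle to be the bookkeeping in the rank step, specifically making precise that the deleted rows of $\mathcal{B}^{(i^*)}$ are exactly a union of full channel-row sets $\mathscr{C}_i^{q,L}$ for at most $2k$ indices $i$, and invoking Condition~\ref{con:cond2} through Definition~\ref{def:min_period_set} correctly. Everything in Algorithm~\ref{alg:case2} is aligned to channel structure ($v = \1_L \otimes v_o$ removes whole channels rather than arbitrary entries), so the decomposition into benign and malicious \emph{channels} is clean; the one point demanding care is confirming that the survivors in $\mathcal{C}^{(i^*)}$, being whole-channel rows, are still accounted for within the $2k$-channel bound, which holds because the excluded channels and the compromised channels together number at most $2k$.
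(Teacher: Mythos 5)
Your proposal is correct and follows essentially the same route as the paper's proof: decompose the selected index set into benign and compromised channels, observe that the benign part omits rows from at most $2k$ channels, invoke Condition~\ref{con:cond2} to get $\rank \mathscr{H}|_{\mathcal{B}^{(i^*)}} = \rank \mathscr{H}|_{\mathcal{I}^{(i^*)}} = \rank \mathscr{H}$, and then repeat the null-space and solution-set argument of Theorem~\ref{lmm:sufficient1}. The only cosmetic difference is that the paper performs the benign/malicious split at the channel level and lifts it to rows via the $(q,L)$-periodical set, whereas you split at the row level and note the split respects the channel structure; the paper also leaves the final portion implicit by citing Theorem~\ref{lmm:sufficient1}, which you write out in full.
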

\begin{proof}
Let us denote \revise{$\mI_o^{(i^*)} \triangleq \idx(v_o^{(i^*)})$}, namely, the index set of the selected channels in the $i$-th \revise{subproblem} that solves \eqref{eqn:g2}. It is clear that $|\mI_o^{(i^*)}|=q-k$. We further define
   \revise{ \[
    \mathcal{C}_o^{(i^*)} \triangleq \mathcal{I}_o^{(i^*)} \cap \mathcal{C}, \; \mathcal{B}_o^{(i^*)} \triangleq \mathcal{I}_o^{(i^*)} \backslash \mathcal{C}_o^{(i^*)}.
\]}
As assumed, \revise{$|\mathcal{C}|\leq k$}. \revise{We therefore conclude that $|\mathcal{B}_o^{(i^*)}|=|\mI_o^{(i^*)}|-|\mathcal{C}_o^{(i^*)}|\geq q-2k$.} That is, for any subsystem, the number of benign selected channels is no less than $q-2k$. Let 
$
\mathcal{B}^{(i^*)} \triangleq \mathscr{C}^{q,L}(\mathcal{B}_o^{(i^*)}).
$
Under Condition~\ref{con:cond2}, 
we conclude 
\begin{equation}\label{eqn:rank}
	\rank \mathscr{H}|_{\mathcal{B}^{(i^*)}} = \rank \mathscr{H}|_{\mathcal{I}^{(i^*)}} = \rank \mH.
\end{equation}  
The remainder of this proof follows a similar structure to that of Theorem~\ref{lmm:sufficient1}.
\end{proof}

Finally, we can prove that Condition~\ref{con:cond2} is tight.
\begin{proposition}
Suppose that \revise{no more than $k$ channels are malicious}, \eqref{eqn:PE} is satisfied, but Condition~\ref{con:cond2} does not hold. Even if Algorithm~\ref{alg:case2} yields a solution $\tilde{w}$, it is possible that $\tilde{w}\neq \bar{w}$.
\end{proposition}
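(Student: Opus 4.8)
The plan is to mirror the construction used in the proof of Proposition~\ref{lmm:necessary}, now working at the level of channels rather than individual entries. Since Condition~\ref{con:cond2} fails, we have $|\widetilde{\mathcal{S}}^*(\mathscr{H})| \leq 2k$, so there is a set of at most $2k$ channels whose $(q,L)$-periodical rows, once removed, cause $\mathscr{H}$ to lose rank. First I would let the adversary choose the compromised channel set $\mathcal{C} \subseteq \{1,\ldots,q\}$ to lie inside $\widetilde{\mathcal{S}}^*(\mathscr{H})$, taking $\mathcal{C} = \widetilde{\mathcal{S}}^*(\mathscr{H})$ when $|\widetilde{\mathcal{S}}^*(\mathscr{H})| \leq k$ and a $k$-element subset otherwise, exactly paralleling \eqref{eqn:C}. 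This guarantees that for some subproblem index $i$ with selected benign channels $\mathcal{B}_o^{(i)}$, the periodical benign set $\mathcal{B}^{(i)} = \mathscr{C}^{q,L}(\mathcal{B}_o^{(i)})$ is small enough that removing the remaining (attacked) channel rows drops the rank, i.e.\ $\rank \mathscr{H}|_{\mathcal{B}^{(i)}} < \rank \mathscr{H}|_{\mathcal{I}^{(i)}}$, whence the nullspace strictly grows: $\nul \mathscr{H}|_{\mathcal{B}^{(i)}} \supsetneq \nul \mathscr{H}|_{\mathcal{I}^{(i)}}$.

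With this rank gap established, the second step is to pick a vector $\tilde{g} \in \{\nul \mathscr{H}|_{\mathcal{B}^{(i)}}\} \backslash \{\nul \mathscr{H}|_{\mathcal{I}^{(i)}}\}$, so that $\mathscr{H}|_{\mathcal{B}^{(i)}} \tilde{g} = 0$ but $\mathscr{H}|_{\mathcal{C}^{(i)}} \tilde{g} \neq 0$, where $\mathcal{C}^{(i)}$ collects the periodical rows of the attacked channels. Setting $g = \bar{g} + \tilde{g}$ with $\bar{g}$ generating the true trajectory via \eqref{eqn:gbar}, the adversary falsifies precisely the attacked-channel entries by defining $w|_{\mathcal{C}^{(i)}} = \mathscr{H}|_{\mathcal{C}^{(i)}} g = \bar{w}|_{\mathcal{C}^{(i)}} + \mathscr{H}|_{\mathcal{C}^{(i)}} \tilde{g} \neq \bar{w}|_{\mathcal{C}^{(i)}}$, while leaving the benign channels untouched since $\mathscr{H}|_{\mathcal{B}^{(i)}} g = w|_{\mathcal{B}^{(i)}}$.

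Finally, I would verify that this $g$ satisfies the consistency check \eqref{eqn:g2} on the full selected index set $\mathcal{I}^{(i)}$, so that Algorithm~\ref{alg:case2} does terminate with output $\tilde{w} = \mathscr{H} g$. A direct block computation, as in the last display of Proposition~\ref{lmm:necessary}, shows $\tilde{w}$ agrees with $w$ (and with $\bar{w}$) on the benign channels but equals the falsified $w|_{\mathcal{C}^{(i)}} \neq \bar{w}|_{\mathcal{C}^{(i)}}$ on the attacked channels, so $\tilde{w} \neq \bar{w}$, completing the construction.

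The main obstacle I anticipate is the bookkeeping that translates the channel-level object $\widetilde{\mathcal{S}}^*(\mathscr{H})$ into the entry-level row index sets through the $(q,L)$-periodical map $\mathscr{C}^{q,L}(\cdot)$: one must confirm that $\mathcal{B}_o^{(i)}$ being a genuinely benign channel subset forces $|\mathcal{B}_o^{(i)}| = q-2k$ (or $\mathcal{B}_o^{(i)} \subseteq \{1,\ldots,q\}\backslash \widetilde{\mathcal{S}}^*(\mathscr{H})$ in the small-attack case), so that the periodical benign rows indeed fall short of spanning the row space and the strict nullspace inclusion holds. Apart from this indexing care, the argument is structurally identical to Proposition~\ref{lmm:necessary}, and the remaining algebraic steps are routine.
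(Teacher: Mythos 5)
Your proposal is correct and follows essentially the same route as the paper: the paper's proof likewise chooses the attacked channel set $\mathcal{C}$ inside $\widetilde{\mathcal{S}}^*(\mathscr{H})$ exactly as in \eqref{eqn:C}, notes that $\mathcal{C}_o^{(i)} = \mathcal{C}$ and $\mathscr{H}|_{\mathcal{C}^{(i)}} = \mathscr{H}|_{\mathcal{C}}$ for any $\mI_o^{(i)} \supseteq \mathcal{C}$, and then invokes the nullspace-perturbation construction of Proposition~\ref{lmm:necessary} verbatim at the channel level. The minor case-split difference (your ``$\leq k$'' versus the paper's ``$<k$'') is immaterial, and your flagged bookkeeping concern is resolved exactly as in the entry-attacked case.
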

\begin{proof}
   The proof of this result follows similar reasoning to that in Proposition~\ref{lmm:necessary}. To be specific, an adversary can set the manipulated channels as
  \begin{equation}\label{eqn:C}
     \begin{cases}
       \mathcal{C} =  \widetilde{\mathcal{S}}^*(\mathscr{H}), \text{ if } |\widetilde{\mathcal{S}}^*(\mathscr{H})| < k,\\
       \mathcal{C} \subset  \widetilde{\mathcal{S}}^*(\mathscr{H}) \text{ with } |\mathcal{C}|=k, \text{ otherwise}. 
     \end{cases}
 \end{equation}
Hence, for any $\mI_o^{(i)}$ such that $\mI_o^{(i)}\supseteq \mathcal{C}$, it holds 
$
    \mathcal{C}_o^{(i)} = \mathcal{C},
$
and
$\mathscr{H}|_{\mathcal{C}^{(i)}}=\mathscr{H}|_{\mathcal{C}},$ where $\mathcal{C}^{(i)} \triangleq \mathscr{C}^{q,L}(\mathcal{C}_o^{(i)})$. Following similar arguments as in Proposition~\ref{lmm:necessary}, we reach the conclusion.
\end{proof}

\revise{Based on the above results, Condition~\ref{con:cond2} is crucial to well-define the problem. That is, in order to ensure the recovery of the true trajectory, removing the rows corresponding to any $2k$ channels must not compromise the rank of $\mH$. Otherwise, the problem becomes infeasible. Moreover, for the same reason discussed in Remark~\ref{rmk:condition}, this condition is essential in working against the worst-case attacks.}

\section{\revise{Computationally-Efficient Solutions}}\label{sec:solve}
The previous sections presented brute force algorithms for solving \eqref{eqn:P0} and \eqref{eqn:P1}. Although these algorithms successfully reconstruct the true trajectory, their complexity grows exponentially with the size of data and the number of malicious entries or channels. In this section, we introduce convex approximations of \eqref{eqn:P0} and \eqref{eqn:P1}. Moreover, we show conditions under which these approximations also yield the real trajectory.

\subsection{Approximation solutions}
\revise{Since a non-convex cardinality constraint is involved in \eqref{eqn:P0}, a natural way to relax it is using an $\ell_1$-norm}:
\begin{equation}\label{eqn:lasso}
	\begin{split}
		g^* = \argmin_{g} \quad& ||w-\mathscr{H}_L\left(w_{d}\right) g||_1.
	\end{split}
\end{equation}
Moreover, let us denote the recovered data as
\begin{equation}
 w^* = \mathscr{H}_L(w_{d}) g^*.   
\end{equation}
Notice that by imposing the $\ell_1$-norm, we aim to achieve a convex relaxation for the $0$-norm, specifically $||w- w^*||_0$. This encourages sparsity in $w- w^*$, which finds the best fit of the collected data at the most entries. 

\revise{While \eqref{eqn:lasso} is an approximation of the original problem~\eqref{eqn:P0}, we prove that under certain conditions, it is also able to exactly exclude the malicious behaviors and recover the real trajectory.}

\begin{theorem}\label{thm:lasso}
  Suppose that \eqref{eqn:PE} is satisfied.  Problem~\eqref{eqn:lasso} recovers the true trajectory. i.e., $w^* = \bar{w}$, if for any $v\in\mathbb{R}^{T-L+1}$ and $v\neq 0$, it holds 
    \begin{equation}\label{eqn:sufficient1}
        ||\mathscr{H}_{\text{B}} v||_1 > ||\mathscr{H}_{\text{F}} v||_1,
    \end{equation}
    where \begin{equation}\label{eqn:h1}
        \mathscr{H}_{\text{B}} \triangleq \mathscr{H}|_{[qL]\backslash \mathcal{C}},\; \mathscr{H}_{\text{F}}  \triangleq \mathscr{H}|_{\mathcal{C}}.
    \end{equation}
\end{theorem}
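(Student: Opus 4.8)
The plan is to show that the feasible point producing the true trajectory strictly dominates every competitor in the $\ell_1$ objective, so that the minimizer is forced to reproduce $\bar w$. By Lemma~\ref{lmm:funda_lemma} together with \eqref{eqn:PE}, there exists $\bar g$ with $\bar w = \mathscr{H}\bar g$, and the claim $w^* = \mathscr{H} g^* = \bar w$ holds precisely when $g^*-\bar g\in\nul\mathscr{H}$. The key structural fact I would exploit is that, in the entry-attacked model, only the entries indexed by $\mathcal{C}$ are corrupted; hence $w|_{[qL]\backslash\mathcal{C}} = \bar w|_{[qL]\backslash\mathcal{C}}$, while $w|_{\mathcal{C}}$ is arbitrary. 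Evaluating the objective at $\bar g$ then gives $\|w - \mathscr{H}\bar g\|_1 = \|w-\bar w\|_1 = \|w|_{\mathcal{C}} - \bar w|_{\mathcal{C}}\|_1$, since the benign block vanishes.

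Next I would write an arbitrary competitor as $g = \bar g + v$ and split the objective along the partition $\{[qL]\backslash\mathcal{C},\, \mathcal{C}\}$ that induces $\mathscr{H}_{\text{B}}$ and $\mathscr{H}_{\text{F}}$ in \eqref{eqn:h1}. On the benign block $w-\bar w$ is zero and $\mathscr{H}v$ restricts to $\mathscr{H}_{\text{B}} v$, contributing $\|\mathscr{H}_{\text{B}} v\|_1$; on the faulty block the contribution is $\|(w|_{\mathcal{C}} - \bar w|_{\mathcal{C}}) - \mathscr{H}_{\text{F}} v\|_1$, which by the reverse triangle inequality is at least $\|w|_{\mathcal{C}} - \bar w|_{\mathcal{C}}\|_1 - \|\mathscr{H}_{\text{F}} v\|_1$. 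Summing the two blocks yields
\begin{equation*}
\|w - \mathscr{H} g\|_1 \geq \|w-\bar w\|_1 + \big(\|\mathscr{H}_{\text{B}} v\|_1 - \|\mathscr{H}_{\text{F}} v\|_1\big).
\end{equation*}
Invoking hypothesis \eqref{eqn:sufficient1}, the parenthesized term is strictly positive whenever the perturbation produces a genuinely different trajectory, so every $g$ with $\mathscr{H} g \neq \bar w$ incurs strictly larger cost than $\bar g$. Consequently any minimizer $g^*$ must satisfy $\mathscr{H} g^* = \bar w$, i.e. $w^* = \bar w$.

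The one point that needs care — and which I regard as the main obstacle — is the interplay between the condition and $\nul\mathscr{H}$: a nonzero $v\in\nul\mathscr{H}$ makes both $\mathscr{H}_{\text{B}} v$ and $\mathscr{H}_{\text{F}} v$ vanish, so the strict inequality \eqref{eqn:sufficient1} is only informative over perturbations with $\mathscr{H} v \neq 0$ (such $v$ leave the trajectory unchanged and are harmless for the claim $w^* = \bar w$). I would make this explicit so that the concluding step correctly establishes uniqueness at the level of \emph{trajectories} rather than of $g$. A secondary, routine check is the block bookkeeping: the reverse triangle inequality is applied only on the faulty block, keeping the benign block $\|\mathscr{H}_{\text{B}} v\|_1$ as the source of the dominating positive term, and this matches the partition fixed in \eqref{eqn:h1}.
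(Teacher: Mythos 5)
Your proof is correct and follows essentially the same route as the paper's: evaluate the objective at $\bar g$, split a competitor $\bar g + v$ along the benign/faulty blocks of \eqref{eqn:h1}, apply the (reverse) triangle inequality on the faulty block, and invoke \eqref{eqn:sufficient1}. Your closing remark about $\nul \mathscr{H}$ is a sensible refinement the paper skips — the paper argues uniqueness directly at the level of $g$, which is legitimate because \eqref{eqn:sufficient1} holding for all $v\neq 0$ in fact forces $\nul \mathscr{H}_{\text{B}}=\{0\}$ and hence $\nul \mathscr{H}=\{0\}$ — but this does not change the substance of the argument.
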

\begin{proof}
  Without loss of generality, let us assume
  \begin{equation}\mH = 
      \begin{bmatrix}
          \mathscr{H}_{\text{B}}\\
          \mathscr{H}_{\text{F}} 
      \end{bmatrix}.
  \end{equation} Correspondingly, $w$ is partitioned as 
  \begin{equation} w=
      \begin{bmatrix}
          w_{\text{B}}\\
          w_{\text{F}} 
      \end{bmatrix}=\begin{bmatrix}
          \bar{w}_{\text{B}}\\
          w_{\text{F}} 
      \end{bmatrix},
  \end{equation} where $w_{\text{B}} \triangleq w|_{[qL]\backslash \mathcal{C}},\; w_{\text{F}}  \triangleq w|_{\mathcal{C}}$, and $\bar{w}_{\text{B}} \triangleq \bar w|_{[qL]\backslash \mathcal{C}}$. To ensure $w^* = \bar{w}$, it is sufficient to show that $\bar{g}$ is the optimal solution of \eqref{eqn:lasso}, i.e., $g^*=\bar{g}$, where $\bar{g}$ generates the true trajectory as defined in \eqref{eqn:gbar}. This holds if and only if for any $v\neq 0$, \begin{equation}
        ||w - \mathscr{H} (\bar{g}+ v)||_1 > ||w - \mathscr{H} \bar{g}||_1.
    \end{equation}
   Since $\bar{w}_{\text{B}}=\mathscr{H}_{\text{B}} \bar{g}$, the above inequality is equivalent to
   \begin{equation}
       \left|\left|\begin{bmatrix}
          \mathscr{H}_{\text{B}} v\\
         w_{\text{F}}  - \mathscr{H}_{\text{F}}  (\bar{g} + v) 
       \end{bmatrix}
 \right|\right|_1  >||w_{\text{F}} - \mathscr{H}_{\text{F}}  \bar{g}||_1.
   \end{equation} Or, equivalently,
   \begin{equation}
       ||\mathscr{H}_{\text{B}} v||_1>||w_{\text{F}} - \mathscr{H}_{\text{F}}  \bar{g}||_1 - ||w_{\text{F}}  - \mathscr{H}_{\text{F}}  (\bar{g} + v) ||_1.
   \end{equation}
  The above inequality follows under \eqref{eqn:sufficient1}, which completes the proof.
\end{proof}

\revise{
\begin{remark}
	The main difficulty in using Theorem~\ref{thm:lasso} is to validate \eqref{eqn:sufficient1} over all nonzero $v$'s. In this remark, we provide a way to verify the condition. Let us define the following epigraphs:
	\begin{equation*}
		\mathcal{E}(B) \triangleq\left\{v: ||\mathscr{H}_{\text{B}} v||_1 \leq 1\right\}, \; 	\mathcal{E}(F) \triangleq\left\{v: ||\mathscr{H}_{\text{F}} v||_1 \leq 1\right\}
	\end{equation*}
	We then propose the following lemma:
	\begin{lemma}
		If there exists $\delta>0$ such that
		\begin{equation}\label{eqn:R1}
			\mathcal{E}(B) \subseteq (1-\delta) \mathcal{E}(F),  
		\end{equation}
		then \eqref{eqn:sufficient1} holds. 
	\end{lemma}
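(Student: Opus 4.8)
The plan is to read the set inclusion \eqref{eqn:R1} as a pointwise comparison of the two seminorms $v \mapsto \|\mathscr{H}_{\text{B}} v\|_1$ and $v \mapsto \|\mathscr{H}_{\text{F}} v\|_1$, exploiting that both are positively homogeneous: $\|\mathscr{H}_{\text{B}}(\lambda v)\|_1 = |\lambda|\,\|\mathscr{H}_{\text{B}} v\|_1$ and likewise for $\mathscr{H}_{\text{F}}$. Because of this homogeneity it suffices to establish \eqref{eqn:sufficient1} on the slice $\{v : \|\mathscr{H}_{\text{B}} v\|_1 = 1\}$, i.e. on the boundary of $\mathcal{E}(B)$, and then recover the general inequality by scaling.

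Concretely, I would fix a nonzero $v$ with $\mathscr{H}_{\text{B}} v \neq 0$ and normalize it so that $\|\mathscr{H}_{\text{B}} v\|_1 = 1$; then $v \in \mathcal{E}(B)$. The hypothesis \eqref{eqn:R1} gives $v \in (1-\delta)\mathcal{E}(F)$, which by the definition of $\mathcal{E}(F)$ means $\|\mathscr{H}_{\text{F}} (v/(1-\delta))\|_1 \leq 1$, i.e. $\|\mathscr{H}_{\text{F}} v\|_1 \leq 1-\delta$. Hence $\|\mathscr{H}_{\text{F}} v\|_1 \leq 1-\delta < 1 = \|\mathscr{H}_{\text{B}} v\|_1$, which is precisely \eqref{eqn:sufficient1} on the normalized slice. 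Undoing the normalization by homogeneity then yields, for every $v$ with $\mathscr{H}_{\text{B}} v \neq 0$, the sharper bound $\|\mathscr{H}_{\text{F}} v\|_1 \leq (1-\delta)\|\mathscr{H}_{\text{B}} v\|_1 < \|\mathscr{H}_{\text{B}} v\|_1$, where the last strict inequality uses $\delta > 0$.

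The step I expect to be the main obstacle is the degenerate case $\mathscr{H}_{\text{B}} v = 0$ with $v \neq 0$, which the homogeneity argument does not reach. There the whole ray $\{\lambda v : \lambda \in \mathbb{R}\}$ lies in $\mathcal{E}(B)$, so \eqref{eqn:R1} forces this ray into $(1-\delta)\mathcal{E}(F)$; letting $|\lambda| \to \infty$ in $|\lambda|\,\|\mathscr{H}_{\text{F}} v\|_1 \leq 1-\delta$ then compels $\mathscr{H}_{\text{F}} v = 0$ as well, collapsing \eqref{eqn:sufficient1} into the false statement $0 > 0$. I would close this gap by ruling out such a $v$ in the setting of Theorem~\ref{thm:lasso}: under \eqref{eqn:PE} together with Condition~\ref{con:mini_set}, deleting the at most $k < 2k+1$ faulty rows preserves the rank, so $\rank \mathscr{H}_{\text{B}} = \rank \mathscr{H}$; since $\mathscr{H}$ has full column rank in this regime, it follows that $\nul \mathscr{H}_{\text{B}} = \{0\}$, i.e. $\mathscr{H}_{\text{B}} v \neq 0$ for every $v \neq 0$. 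With this non-degeneracy secured, the normalization-and-scaling chain of the second paragraph applies to all $v \neq 0$ and completes the argument. The core homogeneity reasoning is elementary; the only genuine care needed is this boundedness/non-degeneracy bookkeeping.
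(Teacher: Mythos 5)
Your core normalization-and-scaling argument is correct and is essentially the paper's own proof: the paper likewise rescales $v$ to a point $v_0=\alpha v$ lying in $\mathcal{E}(F)\setminus\mathcal{E}(B)$ and transfers the resulting inequality back by homogeneity. Your version is arguably cleaner, since it extracts the quantitative bound $\|\mathscr{H}_{\text{F}} v\|_1 \leq (1-\delta)\|\mathscr{H}_{\text{B}} v\|_1$ directly from the inclusion \eqref{eqn:R1} instead of invoking the existence of a suitable $\alpha$.

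However, your patch for the degenerate case $\mathscr{H}_{\text{B}} v = 0$, $v\neq 0$, rests on a false claim. Condition \eqref{eqn:PE} gives $\rank \mathscr{H} = \invm(\Beh)L+\invn(\Beh)$, while $\mathscr{H}$ has $T-L+1$ columns; except in the borderline case $T-L+1=\invm(\Beh)L+\invn(\Beh)$, the Hankel matrix does \emph{not} have full column rank, so $\nul\mathscr{H}_{\text{B}} \supseteq \nul\mathscr{H} \neq \{0\}$ (indeed, the paper's proof of Theorem~\ref{lmm:sufficient1} explicitly manipulates a nontrivial $\nul\mathscr{H}$). Moreover, Condition~\ref{con:mini_set} is not among the hypotheses of Theorem~\ref{thm:lasso} or of this lemma, and it concerns rank preservation under row deletion, not column rank, so invoking it does not help. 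The degenerate case therefore cannot be excluded, and what happens in it is exactly what you computed: \eqref{eqn:R1} forces $\mathscr{H}_{\text{F}} v = 0$, so for $v\in\nul\mathscr{H}\setminus\{0\}$ the strict inequality \eqref{eqn:sufficient1} fails, and the lemma read literally (``for any $v\neq 0$'') is false whenever $\nul\mathscr{H}\neq\{0\}$. The honest resolution is not a rank argument but a restatement: \eqref{eqn:sufficient1} and the lemma should quantify only over $v\notin\nul\mathscr{H}$, which is all the proof of Theorem~\ref{thm:lasso} needs, because perturbing $g$ by an element of $\nul\mathscr{H}$ changes neither the objective in \eqref{eqn:lasso} nor $w^*=\mathscr{H}g^*$. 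Under \eqref{eqn:R1} one has the implication $\mathscr{H}v\neq 0 \Rightarrow \mathscr{H}_{\text{B}}v\neq 0$ (since $\mathscr{H}_{\text{B}}v=0$ forces $\mathscr{H}_{\text{F}}v=0$), so your main argument then covers every relevant $v$. To be fair, the paper's own proof has the same blind spot---its $\alpha$ does not exist when $\mathscr{H}_{\text{B}}v=0$---so identifying this case was a genuine catch; it is only your fix that fails.
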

	\begin{proof}
		Let us consider any $v\neq 0$. Since \eqref{eqn:R1} holds, there must exist $\alpha \neq 0$ and $v_0=\alpha v$ such that $v_0 \in  \mathcal{E}(F) \backslash\mathcal{E}(B)$. That is, $||\mathscr{H}_{\text{B}} v_0||_1>1$ and $||\mathscr{H}_{\text{F}} v_0||_1\leq 1$. Therefore, it follows $||\mathscr{H}_{\text{B}} v_0||_1>||\mathscr{H}_{\text{F}} v_0||_1$. Then it is easy to check $||\mathscr{H}_{\text{B}} v||_1=\alpha||\mathscr{H}_{\text{B}} v_0||_1>\alpha ||\mathscr{H}_{\text{F}} v_0||_1 = ||\mathscr{H}_{\text{F}} v||_1$.
	\end{proof}
	Therefore, it is sufficient to verify \eqref{eqn:R1}. Since $\mathcal{E}(B)$ and $\mathcal{E}(F)$ are both convex polytopes \cite[Chapter~3]{boyd2004convex}, we only need to check
	whether all the vertices of $\mathcal{E}(B)$ lie in the interior of $\mathcal{E}(F)$. The vertices of $\mathcal{E}(B)$ can be found by solving the equation $||\mathscr{H}_{\text{B}} \nu||_1 = 1$, which can be done using linear programming. If for every vertex $\nu$, $||\mathscr{H}_{\text{F}} \nu||_1\leq 1$ holds, then \eqref{eqn:R1} is verified. Alternatively, one can also solve the following optimization problem
	\begin{equation*}
		\begin{split}
			\max_{\nu} \quad &||\mathscr{H}_{\text{F}} \nu||_1,\\
			\text{s.t. } \quad & ||\mathscr{H}_{\text{B}} \nu||_1 = 1.
		\end{split}
	\end{equation*}
	If the optimal value of the above problem is less than $1$, we are also able to validate \eqref{eqn:R1} and thus \eqref{eqn:sufficient1}. 		
\end{remark}}

Instead of \eqref{eqn:sufficient1}, the following condition can also guarantee the effectiveness of \eqref{eqn:lasso}.
\begin{condition}\label{con:H}
 $\rank \mathscr{H}_{\text{B}} = \rank \mathscr{H}$, where $\mathscr{H}_{\text{B}}$ is defined in~\eqref{eqn:h1}. 
\end{condition}
That is, removing ``malicious rows'' does not harm the rank of $\mathscr{H}$. \revise{Note that if Condition~\ref{con:H} does not hold, the benign data itself is not persistently exciting and therefore cannot fully determine the true solution set. This gives the adversary some degree of freedom to arbitrarily modify the data without being detected.}

Under Condition~\ref{con:H}, there exists a matrix $T$ such that
\begin{equation}\label{eqn:T}
    \mathscr{H}_{\text{F}}  = T \mathscr{H}_{\text{B}}.
\end{equation}
\begin{theorem}\label{prop:lasso}
  Suppose that \eqref{eqn:PE} is satisfied and Condition~\ref{con:H} holds. Problem~\eqref{eqn:lasso} recovers the true trajectory. i.e., $w^* = \bar{w}$, if $||T||_1 <1$.
\end{theorem}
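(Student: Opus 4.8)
The plan is to show that the true generator $\bar{g}$ defined in \eqref{eqn:gbar} minimizes the convex objective in \eqref{eqn:lasso}, and that every minimizer maps to the same reconstructed trajectory $\bar{w}$. I would begin, as in the proof of Theorem~\ref{thm:lasso}, by splitting the objective along the benign/faulty row partition of \eqref{eqn:h1}, writing $||w - \mathscr{H} g||_1 = ||w_{\text{B}} - \mathscr{H}_{\text{B}} g||_1 + ||w_{\text{F}} - \mathscr{H}_{\text{F}} g||_1$, and parametrizing candidates as $g = \bar{g} + v$. Since the benign rows carry the true data, $w_{\text{B}} = \mathscr{H}_{\text{B}} \bar{g}$, the first term collapses to $||\mathscr{H}_{\text{B}} v||_1$; substituting the factorization $\mathscr{H}_{\text{F}} = T \mathscr{H}_{\text{B}}$ from \eqref{eqn:T} turns the second term into $||e - T \mathscr{H}_{\text{B}} v||_1$, where $e \triangleq w_{\text{F}} - \mathscr{H}_{\text{F}} \bar{g}$ is the attack injected on the faulty rows.

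Next I would set $z \triangleq \mathscr{H}_{\text{B}} v$ and lower-bound the perturbed objective. Using the reverse triangle inequality together with submultiplicativity of the induced $1$-norm, $||e - T z||_1 \geq ||e||_1 - ||T z||_1 \geq ||e||_1 - ||T||_1\,||z||_1$, which gives $||w - \mathscr{H}(\bar{g}+v)||_1 \geq ||e||_1 + (1 - ||T||_1)\,||z||_1$. The hypothesis $||T||_1 < 1$ makes $1 - ||T||_1$ strictly positive, so the right-hand side is at least $||e||_1 = ||w - \mathscr{H}\bar{g}||_1$. This shows $\bar{g}$ is optimal, and equality forces $||z||_1 = ||\mathscr{H}_{\text{B}} v||_1 = 0$.

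The step requiring the most care is the uniqueness argument, since $||T||_1 < 1$ penalizes only the image direction $z = \mathscr{H}_{\text{B}} v$, not $v$ itself; a nonzero $v \in \nul \mathscr{H}_{\text{B}}$ leaves the objective unchanged and must be excluded at the level of the reconstructed trajectory rather than the generator. Here I would invoke Condition~\ref{con:H}: because $\rank \mathscr{H}_{\text{B}} = \rank \mathscr{H}$ and $\mathscr{H}_{\text{B}}$ is obtained from $\mathscr{H}$ by deleting rows, the null spaces coincide, $\nul \mathscr{H}_{\text{B}} = \nul \mathscr{H}$. Thus for any minimizer $g^* = \bar{g} + v^*$ the equality case yields $\mathscr{H}_{\text{B}} v^* = 0$, hence $\mathscr{H} v^* = 0$, so $w^* = \mathscr{H} g^* = \mathscr{H}\bar{g} = \bar{w}$. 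This closes the recovery claim regardless of which minimizer the solver returns.
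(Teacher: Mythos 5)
Your proof is correct, and it takes a genuinely different---and in fact more careful---route than the paper's. The paper's proof is a two-line reduction: from \eqref{eqn:T} it deduces $\|\mathscr{H}_{\text{F}} v\|_1 \le \|T\|_1\,\|\mathscr{H}_{\text{B}} v\|_1 < \|\mathscr{H}_{\text{B}} v\|_1$ for all $v \neq 0$ and then invokes Theorem~\ref{thm:lasso}. That chain breaks exactly at the directions you flag as delicate: if $v \in \nul \mathscr{H}_{\text{B}}$ with $v \neq 0$, both sides vanish and the strict inequality \eqref{eqn:sufficient1} fails; indeed \eqref{eqn:sufficient1} can hold at all only when $\mathscr{H}$ has full column rank, which under \eqref{eqn:PE} forces $T-L+1$ to equal the rank of $\mathscr{H}$ and is atypical for a Hankel matrix with many columns. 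Your argument sidesteps this: you establish optimality of $\bar{g}$ directly from the lower bound $\|e\|_1 + (1-\|T\|_1)\,\|\mathscr{H}_{\text{B}} v\|_1$, and in the equality case you conclude only $\mathscr{H}_{\text{B}} v^* = 0$ rather than $v^* = 0$; then Condition~\ref{con:H}, combined with the row-deletion inclusion $\nul \mathscr{H} \subseteq \nul \mathscr{H}_{\text{B}}$, gives $\nul \mathscr{H}_{\text{B}} = \nul \mathscr{H}$, so every minimizer of \eqref{eqn:lasso} maps to the same reconstruction $w^* = \bar{w}$ even though the minimizing $g$ need not be unique. What the paper's route buys is brevity and reuse of Theorem~\ref{thm:lasso}; what yours buys is a self-contained proof that remains valid when $\mathscr{H}$ (hence $\mathscr{H}_{\text{B}}$) has a nontrivial null space---the common situation---by replacing uniqueness of the generator $g$ with uniqueness of the reconstructed trajectory, which is precisely the role Condition~\ref{con:H} should play. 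Your version closes a genuine gap in the paper's argument and is the one worth keeping.
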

\begin{proof}
    From \eqref{eqn:T}, we have $||\mathscr{H}_{\text{F}}  v||_1 = ||T\mathscr{H}_{\text{B}} v||_1\leq ||T||_1||\mathscr{H}_{\text{B}} v||_1<||\mathscr{H}_{\text{B}} v||_1$ for any $v\neq 0$. By using Theorem~\ref{thm:lasso}, we finish the proof.
\end{proof}

In general, the matrix $T$ that solves \eqref{eqn:T} is not unique. If there exists a matrix $T$ among all possible solutions that satisfies $||T||_1 < 1$, then Theorem~\ref{prop:lasso} holds.

Note also that no prior knowledge on the number of either false entries or channels is required for solving \eqref{eqn:lasso}, making it applicable to both entry-attacked and channel-attacked scenarios\footnote{In the channel-attacked scenario, instead of \eqref{eqn:h1}, we define $\mathscr{H}_{\text{B}} \triangleq \mathscr{H}|_{[qL]\backslash \mathscr{C}^{q,L}(\mathcal{C})}$ and $\mathscr{H}_{\text{F}} \triangleq \mathscr{H}|_{\mathscr{C}^{q,L}(\mathcal{C})}$. All results remain valid.}. Moreover, the channel-attacked scenario can also be approximated by using the idea of group LASSO \cite{meier2008group}:
\begin{equation}\label{eqn:grp_lasso}
	\begin{split}
		\min_{g, \beta} \quad &\sum_{i=1}^{q} ||\beta|_{\mathscr{C}_i^{q, L}}||_2\\
		s.t. \quad &\beta = w - \mathscr{H}_L\left(w_{d}\right) g,
	\end{split}
\end{equation}
where $\mathscr{C}_i^{q, L}$ is defined in \eqref{eqn:Ci}. We partition the entries of $\beta$ into $q$ groups based on $\mathscr{C}_i^{q, L},$ $i=1,\cdots,q$.
By using \eqref{eqn:grp_lasso}, we require that for each $\mathscr{C}_i^{q, L}$, either all entries of $\beta|_{\mathscr{C}_i^{q, L}}$ are zero or all of them are non-zero. Therefore, data collected from any channel is either entirely discarded or entirely maintained, which aligns with our assumption that the set of manipulated channels keeps constant over time.

\revise{\begin{remark}
Many algorithms are available in the literature to solve the $l_1$ norm minimization problem \eqref{eqn:lasso} and the group LASSO problem \eqref{eqn:grp_lasso},  including coordinate descent, exact path-following, and block coordinate gradient descent methods~\cite{buhlmann2011statistics}. The computational complexities of these algorithms are typically polynomial with respective to the number of channels, the size of the Hankel matrix, and the length of the trajectory. In contrast, brute-force algorithms show exponential complexity. Therefore, \eqref{eqn:lasso} and \eqref{eqn:grp_lasso} significantly reduce the computation costs.
\end{remark}}

\subsection{Extension to noisy data}\label{sec:noise}
In the presence of noisy data, we propose an approximation solution by solving a bi-level optimization problem.

\begin{enumerate}
    \item First, solving \eqref{eqn:lasso} yields the optimal solution $g^*$. 
    \item For each $w_i$, namely, the $i$-th entry of $w$, calculate the residual $|w_i-\mathscr{H}|_{\{i\}} g^*|$. Denote $\mathcal{C}^*$ as the index set of the largest $k$ residuals. 
    \item Let $\mI^* \triangleq [qL]/\mathcal{C}^*$. We solve the following least square problem to find the recovered trajectory $\tilde{w}$:
\begin{equation}
	\begin{split}
		\min_{g, \tilde{w}} \quad &||w|_{\mI^*}-\tilde{w}|_{\mI^*}||_2\\
		s.t. \quad &\tilde{w} = \mathscr{H}_L\left(w_{d}\right) g.
	\end{split}
\end{equation}
\end{enumerate}
The above procedure finds the recovered trajectory when up to $k$ entries of $w$ are subject to manipulation. For the channel-attacked scenario, we can have a similar procedure by leveraging \eqref{eqn:grp_lasso}. The details are omitted in the interest of space. In Section~\ref{sec:sim}, we will test the
effectiveness of these approximations through numerical examples.

\section{Numerical Examples}\label{sec:sim}
This section provides numerical examples to verify the theoretical results. The codes are available on \href{https://github.com/Yanjiaqi0328/Secure-Data-Reconstruction-A-Direct-Data-Driven-Approach}{Github}.

First, we consider the same example as in \cite{anand2023data}, see Fig.~\ref{fig:mass}. The displacement of the masses $m_1, m_2, m_3$ are denoted by $d_1, d_2, d_3$, respectively. A force $u$ is applied on mass $m_1$. The displacements of masses $m_2$ and $m_3$, as well as the velocity of mass $m_3$ are measured.
Define the state as \[x\triangleq [d_1, \dot{d}_1, d_2, \dot{d}_2, d_3, \dot{d}_3]^{\top}.\] Using Newton's laws of motion, a state-space representation of the system is established as
\begin{equation}\label{eqn:mass}
   \dot{x} = Ax+Bu, \quad y = Cx 
\end{equation}
where \begin{equation*}
\begin{split}
  A &= \begin{bmatrix}
    0 & 1 & 0 & 0 & 0 & 0 \\
-\frac{k_1}{m_1} & -\frac{b_1}{m_1} & 0 & 0 & 0 & 0 \\
0 & 0 & 0 & 1 & 0 & 0 \\
\frac{k_2}{m_2} & 0 & -\frac{k_2}{m_2} & -\frac{b_2}{m_2} & 0 & 0 \\
0 & 0 & 0 & 0 & 0 & 1 \\
0 & 0 & \frac{k_3}{m_3} & 0 & -\frac{k_3}{m_3} & -\frac{b_3}{m_3}   
   \end{bmatrix}, \\  
   B&=\begin{bmatrix}
   0 &
\frac{1}{m_1} &
0 &
0 &
0 &
0    
   \end{bmatrix}^{\top}, \\ C &= \bmx 0 & 0 & 1 & 0 & 0 & 0 \\ 0 & 0 & 0 & 0 & 1 & 0 \\ 0 & 0 & 0 & 0 & 0 & 1 \emx.
\end{split}
\end{equation*} 
The parameters in the model are
\revise{$k_1 = 2$N/m}, $m_1 = 1$kg, $b_1 = 3$Ns/m, $k_2 = 3$N/m, $m_2 = 2$kg, $b_2 = 4$Ns/m, $k_3 = 1$N/m,
$m_3 = 10$kg, and $b_3 = 2$Ns/m. We \revise{discretize} the model with a sampling time $T_s = 1.3$s. 
In this example, the system $\Beh$ is represented by a state-space model. We assume the system matrices are unknown, and only input/output data is collected. Specifically, $\invm(\Beh)=1$ and $\invp(\Beh)=3$ are the number of inputs and outputs, respectively.

\begin{figure}
    \centering
    \includegraphics[width=0.45\textwidth]{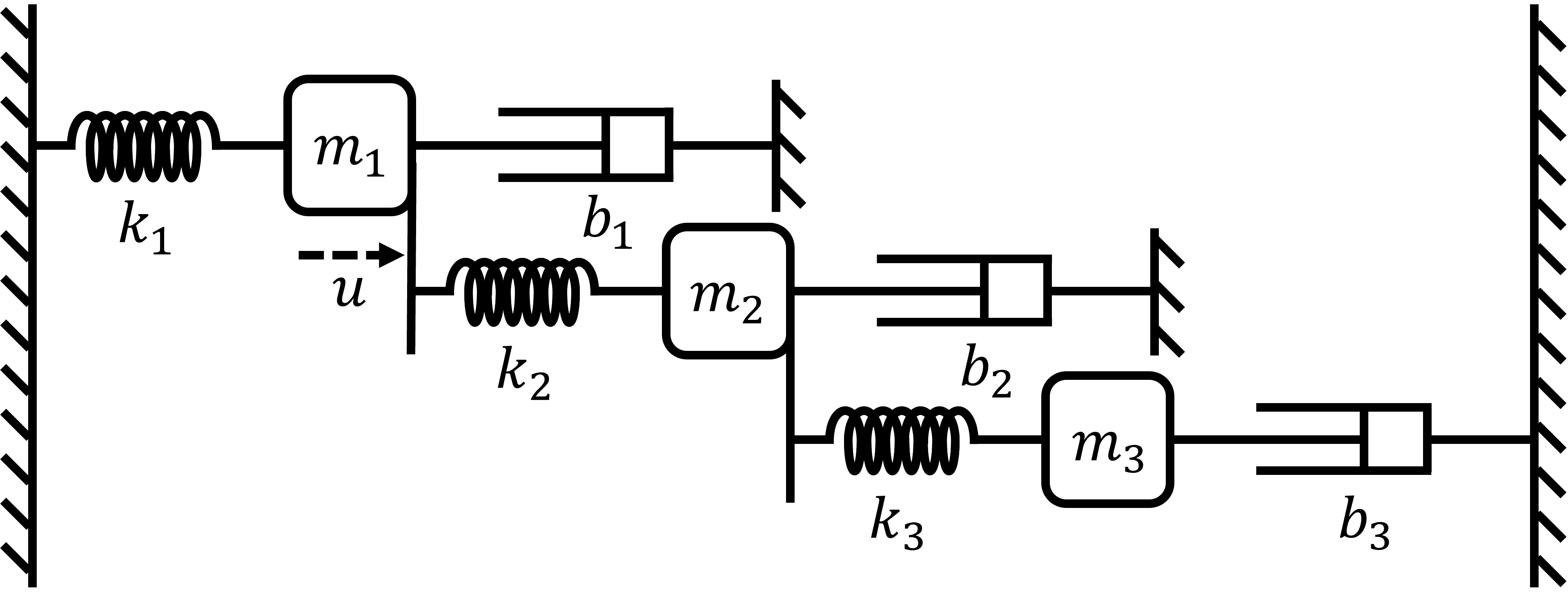}
    \caption{Three interconnected mass spring damper model.}
    \label{fig:mass}
\end{figure}

We generate $w_d$ with the length $T=11$, where the inputs are random variables drawn from a Gaussian distribution $\mathcal{N}(0,1)$ and verify that the generalized persistent condition is satisfied.


The true trajectory is partitioned into a sequence of data segments with length $L=3$. Moreover, the trajectory is contaminated by random noises drawn from $\mathcal{N}(0,0.5)$. Suppose that each segment can be manipulated by adversaries.
We assume $k=1$ and consider two scenarios:
\begin{enumerate}
    \item In the entry-attacked scenario, a single entry of each segment is modified at each step.
    \item In the channel-attacked scenario, all data from the second output channel are manipulated.
\end{enumerate}

\revise{Condition~\ref{con:H} holds in this example, which can be easily verified \textit{offline} since \(\mathscr{H}_{L}(w_d)\) is provided.} During the online computation, we solve the bi-level optimization problem proposed in Section~\ref{sec:noise} by leveraging \eqref{eqn:lasso} and \eqref{eqn:grp_lasso} respectively in the two scenarios, aiming to recover the real trajectory for each segment. As shown in Fig.~\ref{fig:l1} and Fig.~\ref{fig:grp_lasso}, both algorithms reconstruct the true trajectory in the presence of malicious behaviors and random noises. As compared with Fig.~\ref{fig:l1}, Fig.~\ref{fig:grp_lasso} yields worse performance since under the channel-attacked case, more data is manipulated. 
Note that in the figures, we only illustrate output trajectories at the first two dimensions, i.e., $d_2$ and $d_3$. The third output ($\dot{d}_3$) and the input ($u$), although omitted, are also successfully recovered.

\begin{figure}
    \centering
    \includegraphics[width=0.4\textwidth]{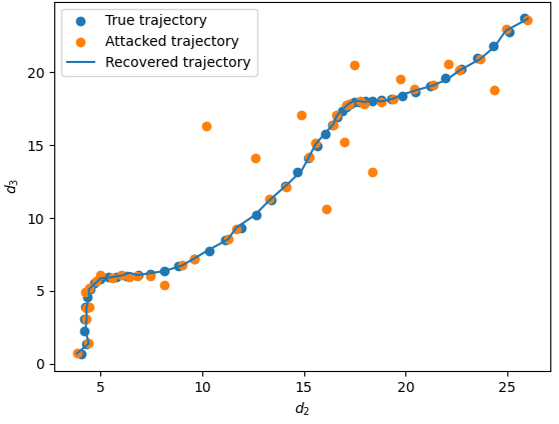}
    \caption{Trajectories under the entry-attacked model by using~\eqref{eqn:lasso}.}
    \label{fig:l1}
\end{figure}

\begin{figure}
    \centering
    \includegraphics[width=0.4\textwidth]{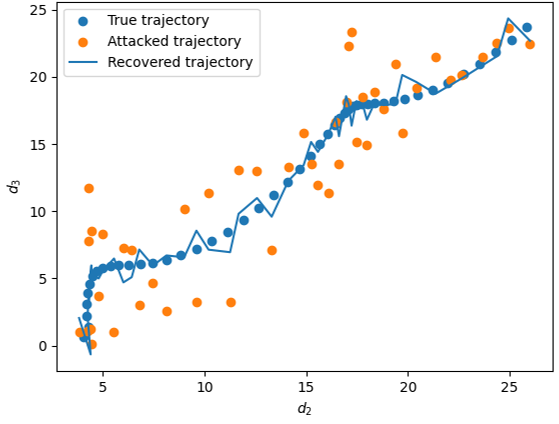}
    \caption{Trajectories under the channel-attacked model by using~\eqref{eqn:grp_lasso}.}
    \label{fig:grp_lasso}
\end{figure}

{\color{blue} \subsection{Comparison to other solutions}
we have added a comparison to the brute-force algorithm proposed in this paper, namely Algorithm~\ref{alg:combinatorial}. Furthermore, since there are few works in the literature that address the secure data reconstruction problem in unknown systems as we do, we also compare our method to a \textit{model-based} state reconstruction algorithm from [Mao2022], where the system matrices are known. 
	
	Notice that both Algorithm~\ref{alg:combinatorial} and [Mao2022] work in a noise-free environment. To ensure a fair comparison, we also apply our proposed solution \eqref{eqn:lasso} under noise-free conditions. The results are given in in Fig.~\ref{fig:compare}, which validate that \eqref{eqn:lasso} indeed recovers the exact trajectory. On the other hand, in this example with $3$ interconnected mass-spring-damper models, the computation times of Algorithm~\ref{alg:combinatorial} and \eqref{eqn:lasso} are $155.80$ms and $14.64$ms, respectively. That means \eqref{eqn:lasso} significantly reduces the complexity. It is clear that this advantage will become more evident as the problem size or the number of malicious entries increases. Moreover, different from [Mao2022], it is important to note that our method does not rely on any prior model knowledge and is purely data-driven.

	\begin{figure*}[htbp]
		\centering
		\subfigure[Trajectories by using Algorithm~\ref{alg:combinatorial}.] {\includegraphics[width=.3\textwidth]{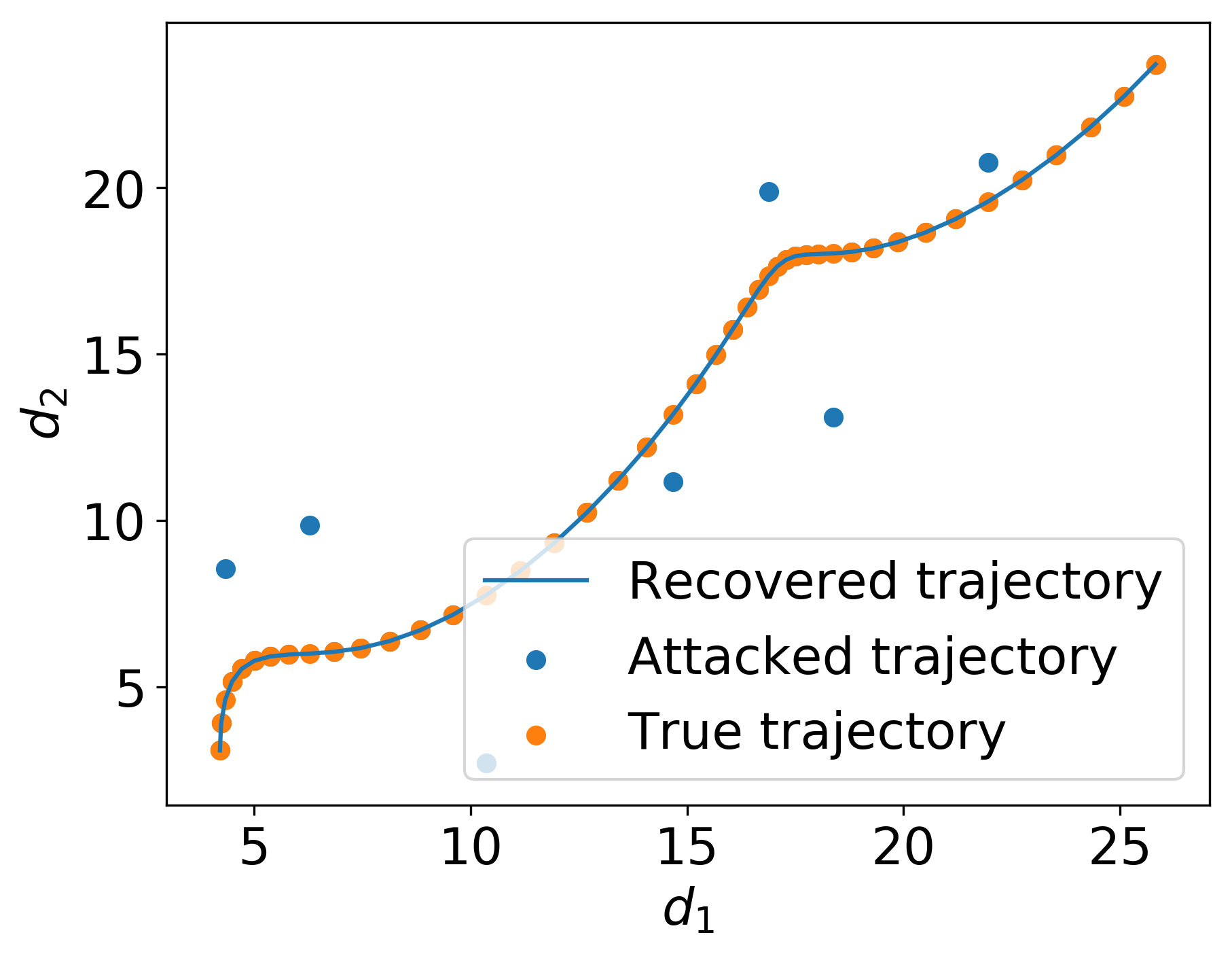}}
		\subfigure[Trajectories by using the model-based algorithm.] {\includegraphics[width=.3\textwidth]{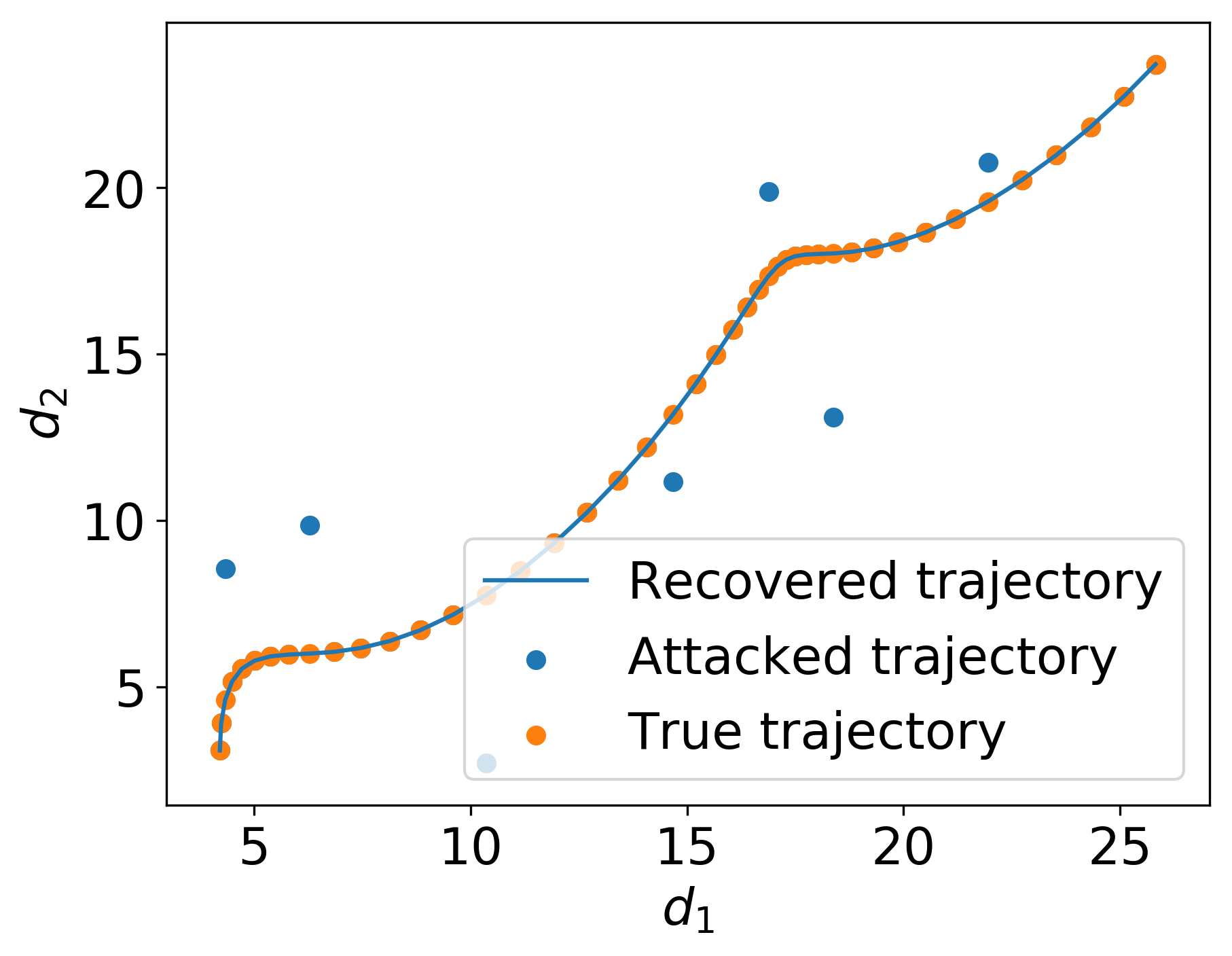}}
		\subfigure[Trajectories by using \eqref{eqn:lasso}.] {\includegraphics[width=.3\textwidth]{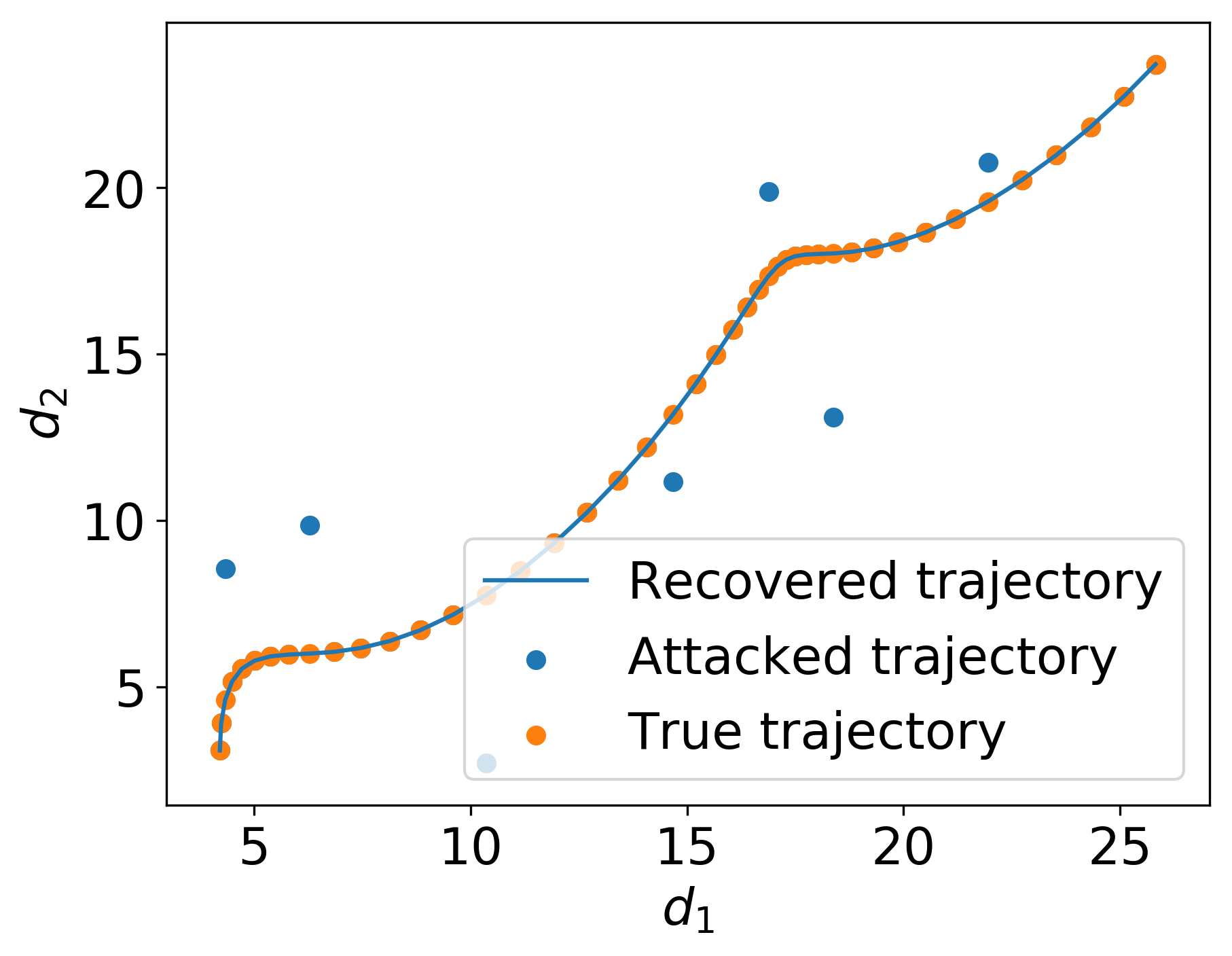}}
		\caption{\revise{Performance comparison among different algorithms.}}
		\label{fig:compare}
	\end{figure*}
	
	\subsection{Computation times in large-scale systems}
To evaluate the computational complexity of the methods proposed in Section~\ref{sec:solve}, we scale up \eqref{eqn:mass} to larger systems of $n$ interconnected mass-spring-damper models, where $n=\{3,10,20,30\}$. Let the state be the displacement and velocity of each mass, the input be a force applied on mass $m_1$, and the output be the displacements of masses $m_i, i\in\{2,3,\cdots,n\}$, as well as the velocity of $m_n$. The system dynamics can be obtained similarly by using Newton's law, with the state, input, and output dimensions being $2n$, $1$, and $n$, respectively. 

All computations are performed on a laptop with a 1.4GHz Intel(R) processor, and the optimization problems are solved using the CPLEX optimizer. We run each example for $50$ times. The computation times are recorded as below.
\begin{table}[!htbp]
	\centering
	\caption{\revise{Computation time of \eqref{eqn:lasso} on systems with different size.}}\label{tab:common}
	\begin{tabular}{c>{\centering}p{6mm}ccc}
		\toprule
		{\diagbox[dir=NW,innerwidth=4cm,height=2.5\line]{{\small Computation time (ms)}}{{\small No. interconnected}\\{\small  models}}} &{\small  $3$} & {\small $10$} & {\small $20$} & {\small $30$}\\
		\midrule
		{\small Average Computation Time} & {\small 4.24} & {\small 5.34} & {\small 7.92} & {\small 14.12} \\
		{\small Worst Computation Time} & {\small 21.72} & {\small 24.53} & {\small 24.80} & {\small 31.99}\\
		\bottomrule
	\end{tabular}
\end{table}

It is evident that the proposed approximate solutions scale well with respect to the problem size. 

Moreover, we show the performance in $n=10$, where the first input channel and the second output channel are attacked simultaneously. The data recovery errors for all channels are shown in Figure~\ref{fig:all_channel}. It is clear that the data on each channel can be successfully recovered.

\begin{figure}[!ht]
	\centering
	\includegraphics[width=0.4\textwidth]{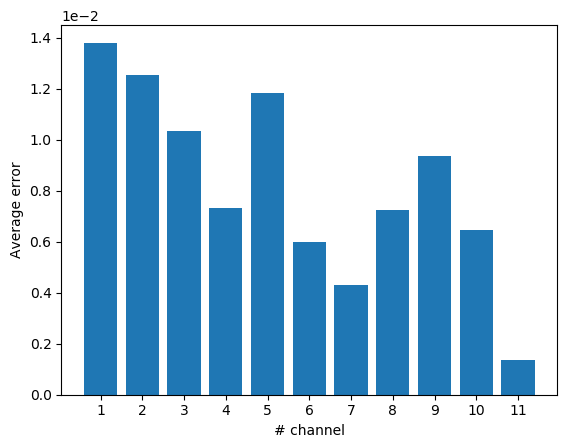}
	\caption{\revise{The average error of each channel.}}
	\label{fig:all_channel}
\end{figure}

\subsection{Performance on a quadruple-tank system}
To demonstrate the applicability of the proposed method across different systems, we further test \eqref{eqn:lasso} on a quadruple-tank system using the same parameters as \cite{zhang2023dimension}. The noise is modeled with a uniform distribution, and the second input channel is subjected to an attack. As shown in Fig.~\ref{fig:tank}, the proposed method also successfully recovers the true trajectory in this scenario.

\begin{figure}[!ht]
	\centering
	\includegraphics[width=0.45\textwidth]{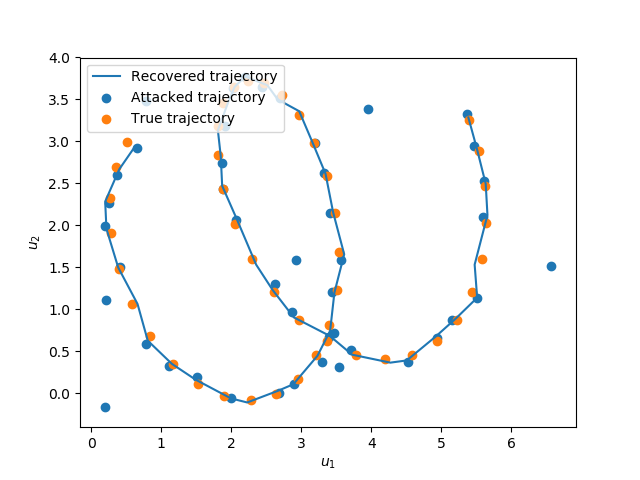}
	\caption{\revise{Trajectories under the channel-attacked model in the quadruple-tank system.}}
	\label{fig:tank}
\end{figure}

}
\section{Conclusion}\label{sec:cond}
This paper addresses the problem of trajectory reconstruction under malicious behaviors. The system is unknown, and we develop direct data-driven approaches without identifying the explicit model. Both entry-attacked and channel-attacked scenarios are investigated. With the behavioral language, we describe the system using trajectories and formulate the problem as optimization problems for both scenarios. Sufficient conditions are presented to guarantee successful data reconstruction. Moreover, we develop approximation solutions to mitigate computation costs and handle noisy data. We further demonstrate that under certain conditions, these approximation solutions also accurately yield the true trajectory.

\revise{For the future works, it is possible to extend the results presented in the paper to nonlinear systems or time-varying systems. Note that the fundamental lemma has already been generalized to various classes of nonlinear systems, such as Hammerstein–Wiener systems \cite{berberich2020}, second-order discrete Volterra systems \cite{9304122}, flat nonlinear systems \cite{9683327}, bilinear systems \cite{ddsim-narx}, and time-varying systems \cite{9683151}. These advancements may enable the achievement of CPS resiliency for more complex systems through direct data-driven methods.}

  \bibliographystyle{IEEEtran}
  \bibliography{reference}

\begin{thebibliography}{10}
\providecommand{\url}[1]{#1}
\csname url@samestyle\endcsname
\providecommand{\newblock}{\relax}
\providecommand{\bibinfo}[2]{#2}
\providecommand{\BIBentrySTDinterwordspacing}{\spaceskip=0pt\relax}
\providecommand{\BIBentryALTinterwordstretchfactor}{4}
\providecommand{\BIBentryALTinterwordspacing}{\spaceskip=\fontdimen2\font plus
\BIBentryALTinterwordstretchfactor\fontdimen3\font minus \fontdimen4\font\relax}
\providecommand{\BIBforeignlanguage}[2]{{%
\expandafter\ifx\csname l@#1\endcsname\relax
\typeout{** WARNING: IEEEtran.bst: No hyphenation pattern has been}%
\typeout{** loaded for the language `#1'. Using the pattern for}%
\typeout{** the default language instead.}%
\else
\language=\csname l@#1\endcsname
\fi
#2}}
\providecommand{\BIBdecl}{\relax}
\BIBdecl

\bibitem{cardenas2008secure}
A.~A. Cardenas, S.~Amin, and S.~Sastry, ``Secure control: Towards survivable cyber-physical systems,'' in \emph{2008 The 28th International Conference on Distributed Computing Systems Workshops}.\hskip 1em plus 0.5em minus 0.4em\relax IEEE, 2008, pp. 495--500.

\bibitem{chowdhury2020observer}
N.~R. Chowdhury, J.~Belikov, D.~Baimel, and Y.~Levron, ``Observer-based detection and identification of sensor attacks in networked cpss,'' \emph{Automatica}, vol. 121, p. 109166, 2020.

\bibitem{teixeira2010networked}
A.~Teixeira, H.~Sandberg, and K.~H. Johansson, ``Networked control systems under cyber attacks with applications to power networks,'' in \emph{Proceedings of the 2010 American control conference}.\hskip 1em plus 0.5em minus 0.4em\relax IEEE, 2010, pp. 3690--3696.

\bibitem{deng2016false}
R.~Deng, G.~Xiao, R.~Lu, H.~Liang, and A.~V. Vasilakos, ``False data injection on state estimation in power systems—attacks, impacts, and defense: A survey,'' \emph{IEEE Transactions on Industrial Informatics}, vol.~13, no.~2, pp. 411--423, 2016.

\bibitem{mishra2016secure}
S.~Mishra, Y.~Shoukry, N.~Karamchandani, S.~N. Diggavi, and P.~Tabuada, ``Secure state estimation against sensor attacks in the presence of noise,'' \emph{IEEE Transactions on Control of Network Systems}, vol.~4, no.~1, pp. 49--59, 2016.

\bibitem{mao2022computational}
Y.~Mao, A.~Mitra, S.~Sundaram, and P.~Tabuada, ``On the computational complexity of the secure state-reconstruction problem,'' \emph{Automatica}, vol. 136, p. 110083, 2022.

\bibitem{li2021low}
Z.~Li and Y.~Mo, ``Low complexity secure state estimation design for linear system with non-derogatory dynamics,'' in \emph{2021 60th IEEE Conference on Decision and Control (CDC)}.\hskip 1em plus 0.5em minus 0.4em\relax IEEE, 2021, pp. 6591--6596.

\bibitem{nakahira2018attack}
Y.~Nakahira and Y.~Mo, ``Attack-resilient $\mathcal{H}_2 $, $\mathcal{H}_\infty $, and $\ell_1 $ state estimator,'' \emph{IEEE Transactions on Automatic Control}, vol.~63, no.~12, pp. 4353--4360, 2018.

\bibitem{yan2022resilient}
J.~Yan, X.~Li, Y.~Mo, and C.~Wen, ``Resilient multi-dimensional consensus in adversarial environment,'' \emph{Automatica}, vol. 145, p. 110530, 2022.

\bibitem{leblanc2013resilient}
H.~J. LeBlanc, H.~Zhang, X.~Koutsoukos, and S.~Sundaram, ``Resilient asymptotic consensus in robust networks,'' \emph{IEEE Journal on Selected Areas in Communications}, vol.~31, no.~4, pp. 766--781, 2013.

\bibitem{piga2017direct}
D.~Piga, S.~Formentin, and A.~Bemporad, ``Direct data-driven control of constrained systems,'' \emph{IEEE Transactions on Control Systems Technology}, vol.~26, no.~4, pp. 1422--1429, 2017.

\bibitem{dorfler2022role}
F.~D{\"o}rfler, P.~Tesi, and C.~De~Persis, ``On the role of regularization in direct data-driven lqr control,'' in \emph{2022 IEEE 61st Conference on Decision and Control (CDC)}.\hskip 1em plus 0.5em minus 0.4em\relax IEEE, 2022, pp. 1091--1098.

\bibitem{markovsky2022data}
I.~Markovsky and F.~D{\"o}rfler, ``Data-driven dynamic interpolation and approximation,'' \emph{Automatica}, vol. 135, p. 110008, 2022.

\bibitem{markovsky2021behavioral}
------, ``Behavioral systems theory in data-driven analysis, signal processing, and control,'' \emph{Annual Reviews in Control}, vol.~52, pp. 42--64, 2021.

\bibitem{markovsky2023data}
I.~Markovsky, L.~Huang, and F.~D{\"o}rfler, ``Data-driven control based on the behavioral approach: From theory to applications in power systems,'' \emph{IEEE Control Systems Magazine}, vol.~43, no.~5, pp. 28--68, 2023.

\bibitem{willems2005note}
J.~C. Willems, P.~Rapisarda, I.~Markovsky, and B.~L. De~Moor, ``A note on persistency of excitation,'' \emph{Systems \& Control Letters}, vol.~54, no.~4, pp. 325--329, 2005.

\bibitem{markovsky2022identifiability}
I.~Markovsky and F.~D{\"o}rfler, ``Identifiability in the behavioral setting,'' \emph{IEEE Transactions on Automatic Control}, vol.~68, no.~3, pp. 1667--1677, 2022.

\bibitem{fl}
I.~Markovsky, E.~Prieto-Araujo, and F.~Dörfler, ``On the persistency of excitation,'' \emph{Automatica}, p. 110657, 2023.

\bibitem{berberich2020}
J.~Berberich and F.~Allgöwer, ``A trajectory-based framework for data-driven system analysis and control,'' in \emph{2020 European Control Conference (ECC)}, 2020, pp. 1365--1370.

\bibitem{9304122}
J.~G. Rueda-Escobedo and J.~Schiffer, ``Data-driven internal model control of second-order discrete volterra systems,'' in \emph{2020 59th IEEE Conference on Decision and Control (CDC)}, 2020, pp. 4572--4579.

\bibitem{9683327}
M.~Alsalti, J.~Berberich, V.~G. Lopez, F.~Allgöwer, and M.~A. Müller, ``Data-based system analysis and control of flat nonlinear systems,'' in \emph{2021 60th IEEE Conference on Decision and Control (CDC)}, 2021, pp. 1484--1489.

\bibitem{9683151}
C.~Verhoek, R.~Tóth, S.~Haesaert, and A.~Koch, ``Fundamental lemma for data-driven analysis of linear parameter-varying systems,'' in \emph{2021 60th IEEE Conference on Decision and Control (CDC)}, 2021, pp. 5040--5046.

\bibitem{ddsim-narx}
I.~Markovsky, ``Data-driven simulation of generalized bilinear systems via linear time-invariant embedding,'' vol.~68, pp. 1101--1106, 2023.

\bibitem{W07}
J.~C. Willems, ``The behavioral approach to open and interconnected systems: {M}odeling by tearing, zooming, and linking,'' \emph{Control Systems Magazine}, vol.~27, pp. 46--99, 2007.

\bibitem{mao2019secure}
Y.~Mao, A.~Mitra, S.~Sundaram, and P.~Tabuada, ``When is the secure state-reconstruction problem hard?'' in \emph{2019 IEEE 58th Conference on Decision and Control (CDC)}.\hskip 1em plus 0.5em minus 0.4em\relax IEEE, 2019, pp. 5368--5373.

\bibitem{shoukry2017secure}
Y.~Shoukry, P.~Nuzzo, A.~Puggelli, A.~L. Sangiovanni-Vincentelli, S.~A. Seshia, and P.~Tabuada, ``Secure state estimation for cyber-physical systems under sensor attacks: A satisfiability modulo theory approach,'' \emph{IEEE Transactions on Automatic Control}, vol.~62, no.~10, pp. 4917--4932, 2017.

\bibitem{ren2018binary}
X.~Ren, J.~Yan, and Y.~Mo, ``Binary hypothesis testing with byzantine sensors: Fundamental tradeoff between security and efficiency,'' \emph{IEEE Transactions on Signal Processing}, vol.~66, no.~6, pp. 1454--1468, 2018.

\bibitem{fawzi2014secure}
H.~Fawzi, P.~Tabuada, and S.~Diggavi, ``Secure estimation and control for cyber-physical systems under adversarial attacks,'' \emph{IEEE Transactions on Automatic control}, vol.~59, no.~6, pp. 1454--1467, 2014.

\bibitem{boyd2004convex}
S.~Boyd and L.~Vandenberghe, \emph{Convex optimization}.\hskip 1em plus 0.5em minus 0.4em\relax Cambridge university press, 2004.

\bibitem{meier2008group}
L.~Meier, S.~Van De~Geer, and P.~B{\"u}hlmann, ``The group lasso for logistic regression,'' \emph{Journal of the Royal Statistical Society Series B: Statistical Methodology}, vol.~70, no.~1, pp. 53--71, 2008.

\bibitem{buhlmann2011statistics}
P.~B{\"u}hlmann and S.~Van De~Geer, \emph{Statistics for high-dimensional data: methods, theory and applications}.\hskip 1em plus 0.5em minus 0.4em\relax Springer Science \& Business Media, 2011.

\bibitem{anand2023data}
S.~C. Anand, M.~S. Chong, and A.~M. Teixeira, ``Data-driven identification of attack-free sensors in networked control systems,'' \emph{arXiv preprint arXiv:2312.04845}, 2023.

\bibitem{zhang2023dimension}
K.~Zhang, Y.~Zheng, C.~Shang, and Z.~Li, ``Dimension reduction for efficient data-enabled predictive control,'' \emph{IEEE Control Systems Letters}, 2023.

\end{thebibliography}

  \end{document}